\newcommand{\longversion}[1]{#1}
\newcommand{\shortversion}[1]{}
\newcommand{\trash}[1]{}
\newcommand{\footnoteitext}[1]{\stepcounter{footnote}
  \footnotetext[\thefootnote]{#1}}
\def\etal{et~al.{}}
\newcommand{\Ra}{\ensuremath{\Rightarrow}}
\newcommand{\La}{\ensuremath{\Leftarrow}}
\newcommand{\pif}{\noindent(\La)}
\newcommand{\ponlyif}{\noindent(\Ra)}
\newcommand{\rsep}{\nobreak\ensuremath{;\;}}
\newcommand{\pnot}{\nobreak\ensuremath{\neg}}
\DeclareFontFamily{U}{matha}{\hyphenchar\font45}
\DeclareFontShape{U}{matha}{m}{n}{
      <5> <6> <7> <8> <9> <10> gen * matha
      <10.95> matha10 <12> <14.4> <17.28> <20.74> <24.88> matha12
      }{}
\DeclareSymbolFont{matha}{U}{matha}{m}{n}
\DeclareMathSymbol{\squplus}{2}{matha}{"5D}
\newcommand{\raisemath}[1]{\mathpalette{\raisem@th{#1}}}
\newcommand{\raisem@th}[3]{\raisebox{#1}{$#2#3$}}
\newcommand{\pushright}[1]{\ifmeasuring@#1\else\omit\hfill\ensuremath{\displaystyle#1}\fi\ignorespaces}
\newcommand{\pushleft}[1]{\ifmeasuring@#1\else\omit$\displaystyle#1$\hfill\fi\ignorespaces}
\providecommand{\leftsquigarrow}{%
	\mathrel{\mathpalette\reflect@squig\relax}%
}
\newcommand{\reflect@squig}[2]{%
	\reflectbox{$\m@th#1\rightsquigarrow$}%
}
\DeclareMathOperator{\disj}{DISJ}%
\DeclareMathOperator{\choice}{CH}%
\DeclareMathOperator{\opt}{OPT}%
\DeclareMathOperator{\optimize}{\leftsquigarrow}
\DeclareMathOperator{\type}{type}
\newcommand{\intr}{\textit{int}}
\newcommand{\leaf}{\textit{leaf}}
\newcommand{\rem}{\textit{rem}}
\newcommand{\join}{\textit{join}}
\DeclareMathOperator{\SSR}{SatPr}
\DeclareMathOperator{\post}{post-order}
\newcommand{\AlgA}{\ensuremath{\mathbb{A}}}%
\newcommand{\AlgC}{\ensuremath{\mathbb{C}}}%
\newcommand{\AlgW}{\ensuremath{\mathbb{W}}}%
\DeclareMathOperator{\E}{E}
\newcommand{\BIGOP}[1]
{
\mathop{\mathchoice%
{\raise-0.22em\hbox{\huge $#1$}}%
{\raise-0.05em\hbox{\Large $#1$}}{\hbox{\large $#1$}}{#1}}}
\newcommand{\BIGboxplus}{\mathop{\mathchoice%
{\raise-0.35em\hbox{\huge $\boxplus$}}%
{\raise-0.15em\hbox{\Large $\boxplus$}}{\hbox{\large $\boxplus$}}{\boxplus}}}
\DeclareMathOperator{\compr}{compr}
\DeclareMathOperator{\orig}{orig}
\DeclareMathOperator{\subsets}{subs}
\DeclareMathOperator{\evol}{evol}
\DeclareMathOperator{\res}{res}
\DeclareMathOperator{\aug}{aug}
\DeclareMathOperator{\sol}{sol}
\newcommand{\Tab}[1]{\ensuremath{\text{Child-Tabs}}}
\newcommand{\Tabs}[1]{\ensuremath{{\AlgA}\text{-}\text{Tabs[$#1$]}}}
\newcommand{\ATabs}[2]{\ensuremath{#1\text{-}\text{Tabs[$#2$]}}}
\newcommand{\ATab}[1]{\ensuremath{#1\text{-}\text{Tabs}}}
\renewcommand{\Big}{} %
\newcommand{\CCC}{\ensuremath{\mathcal{C}}}%
\newcommand{\TTT}{\ensuremath{\mathcal{T}}}%
\newcommand{\ASP}{\textsc{ASP}\xspace}
\newcommand{\pname}[1]{\textsc{#1}\xspace}
\newcommand*\mcupinn[2]{\vcenter{\hbox{$\mathsurround=0pt
  \ifx\displaystyle#1\textstyle\else#1\fi\bigcup$}}}
\newcommand{\emptyfunc}{\emptyset}
\newcommand{\inputPredColor}{orange!55!red}
\newcommand{\outputPredColor}{blue!45!black}
\newcommand{\statePredColor}{green!62!black}
\newcommand{\specialPredColor}{red!62!black}
\algrenewcommand\algorithmicensure{\textbf{Output:}}
\algrenewcommand{\algorithmiccomment}[1]{\emph{// #1}}
\newcommand{\tuplecolor}[1]{\textcolor{#1}}
\lstdefinelanguage{dflat}{
	numberstyle=\tiny,
	otherkeywords={:-},
	morekeywords={not},
	keywordstyle=\bfseries,
	emph={numChildNodes,initial,final,currentNode,childNode,bag,current,introduced,removed,atLevel,atNode,root,rootOf,leaf,leafOf,sub,childItem,childAuxItem,childCost,childOr,childAnd,childAccept,childReject,childRow},
	moreemph=[2]{item,auxItem,extend,cost,currentCost,length,or,and,accept,reject},
	alsoletter={\#}, %
	morecomment=[l]{\%},
	emphstyle=\color{\inputPredColor},
	emphstyle=[2]\color{\outputPredColor},
	literate={:-}{{$\leftarrow$}}2 {!=}{{$\neq$}}1, 
	breakindent=3em,
	escapechar=@, %
	captionpos=b,
        frame=bt, 
        numbers=left
}
  \theoremstyle{plain}
  \newtheorem{proposition}{Proposition}%
  \newtheorem{observation}{Observation}
  \newtheorem{remark}{Remark}%
  \newtheorem{theorem}{Theorem}%
  \newtheorem{lemma}{Lemma}%
  \newtheorem{proposition}{Proposition}%
  \newtheorem{observation}{Observation}
  \newtheorem{example}{Example}%
  \newtheorem{definition}{Definition}%
  \newtheorem{corollary}{Corollary}%
\newcommand{\MAI}[1]{\ensuremath{#1^+_a}}%
\newcommand{\MAR}[1]{\ensuremath{#1^-_a}}%
\newcommand{\MARR}[2]{\ensuremath{#1^-_{#2}}}%
\newcommand{\MAIR}[2]{\ensuremath{#1^+_{#2}}}%
\newcommand{\tabval}{\ensuremath{u}}
\newcommand{\tab}[1]{\ensuremath{\tau_{#1}}}
\newcommand{\at}{\text{\normalfont at}}
\newcommand{\oo}{\text{\normalfont cst}}
\newcommand{\att}[1]{\ensuremath{\at_{\hspace{-0.05em}\leq\hspace{-0.05em}#1}}}
\newcommand{\atto}{\ensuremath{\att{t}}}
\newcommand{\progt}[1]{\ensuremath{\prog_{\hspace{-0.05em}\leq\hspace{-0.05em}#1}}}
\newcommand{\progtneq}[1]{\ensuremath{\prog_{\hspace{-0.05em}<\hspace{-0.05em}#1}}}
\newcommand{\dpa}{\ensuremath{\mathtt{DP}}}
\newcommand{\cw}{\ensuremath{\mathtt{DPL}}}
\newcommand{\por}{\vee}
\newcommand{\eqdef}{\ensuremath{\,\mathrel{\mathop:}=}}
\newcommand{\hsep}{\leftarrow\,}
\newcommand{\Card}[1]{|#1|}
\newcommand{\AspComp}{\pname{AS}} %
\newcommand{\AspCount}{\pname{\#Asp}} %
\newcommand{\AspEnum}{\pname{EnumAsp}} %
\newcommand{\MSat}{\pname{MinSAT1}}
\newcommand{\MSatEnum}{\pname{EnumMinSAT1}}
\newcommand{\AspCountO}{\pname{\AspCount{}O}} %
\newcommand{\PRIMSAT}{\ensuremath{{\algo{MOD}}}\xspace}
\newcommand{\INCSAT}{\ensuremath{{\algo{MOD}}}\xspace}
\newcommand{\INCCOUNTERSAT}{\ensuremath{{\algo{CMOD}}}\xspace}
\newcommand{\INC}{\ensuremath{{\algo{SINC}}}\xspace}
\newcommand{\IINC}{\ensuremath{{\algo{INC}}}\xspace}
\newcommand{\problemFont}[1]{\textsc{#1}}
\newlength\problemlength
\newcommand\dproblem[3]{%
\begin{center}
\fbox{%
\begin{minipage}{.93	\linewidth}%
\begin{list}{}{\labelwidth\problemlength \labelsep.7em \rightmargin1.5em
\leftmargin\problemlength \advance\leftmargin by3em%
\parsep0ex \itemsep.2ex plus.1ex}
\item[{\sl Problem:\hfill}] {\problemFont{#1}}
\item[{\sl Input:  \hfill}] #2
\item[{\sl Task: \hfill}] #3
\end{list}
\end{minipage}
}
\end{center}
}
\newenvironment{indented}{\begin{changemargin}{1cm}{0cm}}{\end{changemargin}}
\let\phi\varphi
\let\epsilon\varepsilon
\renewcommand{\models}{\vDash}
\newcommand{\CCard}[1]{\|#1\|}
\newcommand{\algo}[1]{\ensuremath{\mathbb{#1}}}
\newcommand{\NP}{\ensuremath{\textsc{NP}}\xspace}
\newcommand{\SIGMA}[2]{\ensuremath{\Sigma_{\textrm{#1}}^{\textrm{#2}}}}
\newcommand{\bigO}[1]{\ensuremath{{\mathcal O}(#1)}}
\newcommand{\tw}[1]{\mathit{tw}(#1)}
\newcommand{\SB}{\{}%
\newcommand{\SM}{\mid}%
\newcommand{\SE}{\}}%
\def\hy{\hbox{-}\nobreak\hskip0pt}
\newcommand{\solver}[1]{\mbox{\text{#1}}\xspace}
\newcommand{\dynasp}[1]{\ensuremath{\solver{DynASP2}}}
\newcommand{\dynaspplus}[1]{\ensuremath{\solver{DynASP2.5}}}
\newcommand{\prog}{\ensuremath{P}}
\tikzstyle{arg}=[draw, thick, circle]
\colorlet{afnodecolor}{green!20!blue!10}
\colorlet{tdnodecolor}{green!20!blue!10}
\colorlet{subfwnodecolor}{black!2}
\colorlet{subfwafinactivenodecolor}{white}
\colorlet{vertexTopColor}{white}
\colorlet{vertexBottomColor}{black!10}
\tikzstyle{afnode} = [draw,thick,shape=circle,minimum size=8mm,font=\normalsize,fill=afnodecolor]
\tikzstyle{afedge} = [->,draw,thick]
\tikzstyle{tdnode} = [draw,rounded corners,top color=vertexTopColor,bottom color=vertexBottomColor,minimum size=1.5em]
\tikzstyle{stdnode} = [tdnode, font=\scriptsize]
\tikzstyle{stdnodecompact} = [stdnode, inner sep = 1.5pt, outer sep = 0.1pt]
\tikzstyle{stdnodetable} = [stdnode, inner sep = 1.5pt, outer sep = 0]
\tikzstyle{stdnodenum} = [minimum size=1.5em, font=\scriptsize]
\tikzstyle{tdedge} = [-,draw,thick]
\tikzstyle{tdlabel} = [draw=none, rectangle, fill=none, inner sep=0pt, font=\scriptsize]
\tikzstyle{subfwnode} = [draw,thick,shape=rectangle,thin,rounded corners,minimum size=9mm,fill=subfwnodecolor,label distance=-2.5mm]
\tikzstyle{subfwafactivenode} = [draw,thick,shape=circle,minimum size=6mm,inner sep = 0pt,font=\scriptsize,fill=afnodecolor]
\tikzstyle{subfwafinactivenode} = [draw,thick,shape=circle,minimum size=6mm,inner sep = 0pt,font=\scriptsize,fill=white,dotted]
\tikzstyle{subfwafinactiveedge} = [->,draw,thick,dotted]
\tikzstyle{itemTree}=[level distance=2em,sibling distance=4ex,child anchor=west,grow'=right,right,align=left,every node/.style={draw,dashed,draw opacity=0.2,font=\footnotesize}]
\tikzstyle{itemTreeRoot}=[solid,inner sep=2]
\tikzstyle{orNode}=[label=left:$\lor$]
\tikzstyle{andNode}=[label=left:$\land$]
\tikzstyle{acceptNode}=[label=right:$\top$]
\tikzstyle{rejectNode}=[label=right:$\bot$]
\newcommand{\MDP}{\ensuremath{\mathtt{M\hy DP}_{\INC}}\xspace}
    \thanks{This is the author’s self-archived copy including
      detailed proofs. Research was supported by
      the Austrian Science Fund (FWF), Grant Y698.}  }
\author{Johannes K. Fichte, Markus Hecher, Michael Morak, Stefan Woltran\\ 
TU Wien, Vienna, Austria\\
  \texttt{lastname@dbai.tuwien.ac.at}}}
\author[1]{Johannes K. Fichte}
\author[2]{Markus Hecher}
\author[3]{Michael Morak}
\author[4]{Stefan Woltran}
\affil[1]{TU Wien, Vienna, Austria\\
  \texttt{lastname@dbai.tuwien.ac.at}}
\authorrunning{J.\,K. Fichte, M. Hecher, M. Morak, and S. Woltran}
\subjclass{Dummy classification -- please refer to
  \url{http://www.acm.org/about/class/ccs98-html}}%
\keywords{ Parameterized algorithms; Fixed-parameter linear time;
  Semi\hy incidence graph; Tree decompositions, Multi-pass dynamic
  programming }%
\begin{document}

\label{firstpage}

\maketitle
\begin{abstract}%
  A vibrant theoretical research area are efficient exact
  parameterized algorithms. Very recent solving competitions such as
  the PACE challenge show that there is also increasing practical
  interest in the parameterized algorithms community. An important
  research question is whether dedicated parameterized exact
  algorithms exhibit certain practical relevance and one can even beat
  well-established problem solvers.
  We consider the logic-based declarative modeling language and
  problem solving framework Answer Set Programming (\ASP).
  State-of-the-art \ASP solvers rely considerably on
  \textsc{Sat}-based algorithms.  An \ASP solver (DynASP2), which is
  based on a classical dynamic programming on tree decompositions, has
  been published very recently.
  Unfortunately, DynASP2 can outperform modern \ASP solvers on
  programs of small treewidth only if the question of interest is to
  count the number of solutions.
  In this paper, we describe underlying concepts of our new
  implementation (DynASP2.5) that shows \emph{competitive behavior} to
  state-of-the-art \ASP solvers \emph{even} for \emph{finding just one
    solution} when solving problems as the Steiner tree problem that
  have been modeled in \ASP on graphs with low treewidth. Our
  implementation is based on a novel approach that we call multi-pass
  dynamic programming (\MDP).

\end{abstract}

\newcommand{\SAT}{\textsc{Sat}\xspace}
\newcommand{\UNSAT}{\textsc{Unsat}\xspace}
\section{Introduction}

Answer set programming (\emph{\ASP}) is a logic-based declarative
modelling language and problem solving
framework~\cite{JanhunenNiemela16a}, where a program consists of sets
of rules over propositional atoms and is interpreted under an extended
stable model semantics~\cite{SimonsNiemelaSoininen02}.
Problems are usually modelled in \ASP in such a way that the stable
models (answer sets) of a program directly form a solution to the
considered problem instance.
Computational problems for disjunctive, propositional \ASP such as
deciding whether a program has an answer set are complete for the
second level of the Polynomial Hierarchy~\cite{EiterGottlob95}.
In consequence, finding answer sets usually involves a \SAT part
(finding a model of the program) and an \UNSAT part (minimality check).
A variety of CDCL-based \ASP solvers have been
implemented~\shortversion{\cite{clasp}}\longversion{\cite{clasp,AlvianoEtAl13}} and proven to be very
successful in solving competitions\longversion{~\cite{GebserMarateaRicca16a}}.
Very recently, a dynamic programming based solver (\emph{DynASP2})
that builds upon ideas from parameterized algorithmics was
proposed~\cite{FichteEtAl17}.
For disjunctive input programs, the runtime of the underlying
algorithms is double exponential in the incidence treewidth and linear
in the input size (so-called \emph{fixed-parameter linear}
algorithms).
DynASP2~(i) takes a tree decomposition of a certain graph
representation (incidence graph) of a given input program and
(ii)~solves the program via dynamic programming (DP) on the tree
decomposition by traversing the tree exactly once.  Both finding a
model and checking minimality are considered at the same time. Once
the root node has been reached, complete solutions (if exist) for the
input program can be constructed.
This approach pays off for counting answer sets, but is \emph{not
  competitive for outputting just one answer set}. The reason for that
lies in the exhaustive nature of dynamic programming as all potential
values are computed \emph{locally} for each node of the tree
decomposition.
In consequence, space requirements can be quite extensive resulting in
long running times.
Moreover, dynamic programming algorithms on tree decompositions may
yield extremely diverging run-times on tree decompositions of the
exact same width~\cite{AbseherEtAl17a}.
In this paper, we
propose a multi-pass approach (\MDP) for dynamic programming on tree
decompositions as well as a new implementation
(DynASP2.5).  %
In contrast to classical dynamic programming algorithms for problems
on the second level of the Polynomial Hierarchy, \MDP traverses the
given tree decomposition multiple times. Starting from the leaves, we
compute and store (i)~sets of atoms that are relevant for the \SAT
part (finding a model of the program) up to the root. Then we go back
again to the leaves and compute and store (ii)~sets of atoms that are
relevant for the \UNSAT part (checking for minimality). Finally, we go
once again \emph{back to} the leaves and (iii)~link sets from past
Passes~(i) and (ii) that might lead to an answer set in \emph{the
  future}. As a result, we allow for early cleanup of candidates
that do not lead to answer sets.

Further, we present technical improvements (including working on
non-normalized tree decompositions) and employ dedicated
\emph{customization techniques} for selecting tree decompositions. Our
improvements are main ingredients to speedup the solving process for
DP algorithms.
Experiments indicate that DynASP2.5 is competitive even for
finding one answer set using the Steiner tree
problem on graphs with low treewidth. In particular, we are able
to solve instances that have an upper bound on the incidence treewidth
of~$14$ (whereas DynASP2 solved instances of treewidth at most~$9$).

\noindent
\emph{Our main contributions} can be summarized as follows:
\begin{enumerate}
\item We establish a novel fixed-parameter linear algorithm (\MDP),
  which works in multiple passes and computes \SAT and \UNSAT parts
  separately.
\item We present an implementation (DynASP2.5)\footnote{The source
    code of our solver is available at
    \url{https://github.com/daajoe/dynasp/releases/tag/v2.5.0}.} and
  an experimental evaluation. %
\end{enumerate}
\noindent \emph{Related Work.} Jakl, Pichler, and
Woltran~\cite{JaklPichlerWoltran09} have considered ASP solving when
parameterized by the treewidth of a graph representation and suggested
fixed-parameter linear algorithms.
Fichte~\etal~\cite{FichteEtAl17} have established additional
algorithms and presented empirical results on an implementation that
is dedicated to counting answer sets for the full ground ASP
language. The present paper extends their work by a multi-pass dynamic
programming algorithm.
Bliem~\etal~\cite{BliemEtAl16b} have introduced a general multi-pass
approach and an implementation (\text{D-FLAT}\^{}2) for dynamic
programming on tree decompositions solving subset minimization
tasks. Their approach allows to specify dynamic programming algorithms
by means of ASP. In a way, one can see ASP in their approach as a
meta-language to describe table algorithms\footnote{See
  Algorithm~\ref{fig:dpontd} for the concept of table algorithms.},
whereas our work presents a dedicated algorithm to find an answer set
of a program. In fact, our implementation extends their general ideas
for subset minimization (disjunctive rules) to also support weight
rules. However, due to space constraints we do not report on weight
rules in this paper. Beyond that, we require specialized adaptions to
the ASP problem semantics, including three valued evaluation of atoms,
handling of non-normalized tree decompositions, and optimizations in
join nodes to be competitive.
Abseher, Musliu, and Woltran~\cite{AbseherMusliuWoltran17a} have
presented a framework that computes tree decompositions via
heuristics, which is also used in our solver.  Other tree
decomposition systems can be found on the PACE challenge
website\longversion{~\cite{Dell17a}}. Note that improved heuristics for finding a
tree decomposition of smaller width (if possible) directly yields
faster results for our solver.

\section{Formal Background} %
\label{sec:preliminaries}%
\shortversion{\noindent \textit{Tree Decompositions.}}%
\longversion{\subsection{Tree Decompositions}}%
\longversion{%

} %
Let $G = (V,E)$ be a graph, $T = (N,F,n)$ a rooted tree, and
$\chi: N \to 2^V$ a function that maps each node~$t \in N$ to a set of
vertices. We call the sets $\chi(\cdot)$ \emph{bags} and $N$ the set
of nodes. Then, the pair~${\mathcal{T}} = (T,\chi)$ is a \emph{tree
  decomposition (TD)} of~$G$ if the following conditions hold:
\begin{inparaenum}[(i)]
\item %
  for every vertex~$v \in V$ there is a node~$t \in N$ with
  $v \in \chi(t)$;
\item %
  for every edge~$e \in E$ there is a node~$t \in N$ with
  $e \subseteq \chi(t)$; and
\item %
  for any three nodes~$t_1,t_2,t_3\in N$, if $t_2$ lies on the unique
  path from~$t_1$ to~$t_3$, then
  $\chi(t_1)\cap \chi(t_3) \subseteq \chi(t_2)$.
\end{inparaenum}
We call $\max\SB \Card{\chi(t)} - 1 \SM t \in N\SE$ the \emph{width}
of the TD. The \emph{treewidth}~$\tw{G}$ of a graph~$G$ is the
minimum width over all possible TDs of~$G$.
\longversion{%
}%
\longversion{Note that each graph has a trivial TD~$(T,\chi)$
  consisting of the tree~$(\{n\}, \emptyset, n)$ and the mapping
  $\chi(n) = V$.  It is well known that the treewidth of a tree
  is~$1$, and a graph containing a clique of size $k$ has at least
  treewidth $k-1$.}
For some arbitrary but fixed integer~$k$ and a graph of treewidth at
most~$k$, we can compute a TD of width $\leqslant k$ in
time~$2^{\bigO{k^3}} \cdot \Card{V}$~\cite{BodlaenderKoster08}.
Given a TD $(T,\chi)$ with $T = (N,\cdot,\cdot)$, for a node~$t \in N$
we say that $\type(t)$ is $\leaf$ if $t$ has no children; $\join$ if
$t$ has children~$t'$ and $t''$ with $t'\neq t''$ and
$\chi(t) = \chi(t') = \chi(t'')$; $\intr$ (``introduce'') if $t$ has a
single child~$t'$, $\chi(t') \subseteq \chi(t)$ and
$|\chi(t)| = |\chi(t')| + 1$; $\rem$ (``removal'') if $t$ has a single
child~$t'$, $\chi(t') \supseteq \chi(t)$ and
$|\chi(t')| = |\chi(t)| + 1$. If every node $t\in N$ has at most two
children, $\type(t) \in \{ \leaf, \join, \intr, \rem\}$, and bags of
leaf nodes and the root are empty, then the TD is called \emph{nice}.
For every TD, we can compute a nice TD in linear time without
increasing the width~\cite{BodlaenderKoster08}.
Later, we traverse a TD bottom up, therefore, let
$\post(T,t)$ be the sequence of nodes in post-order of the induced
subtree~$T'=(N',\cdot, t)$ of $T$ rooted at~$t$. 

\shortversion{\noindent \textit{Answer Set programming (ASP).} }%
\longversion{\subsection{Answer Set programming (ASP)}}%
\emph{ASP} is a declarative modelling and problem solving framework
that combines techniques of knowledge representation and database
theory. A main advantage of ASP is its expressiveness and when using
non-ground programs the advanced declarative problem modelling
capability. Prior to solving, non-ground programs are usually compiled
into grounded by a grounder. In this paper, we restrict ourselves to
ground ASP programs.
For a comprehensive introduction,
see,~e.g.,~\cite{JanhunenNiemela16a}.
\noindent Let $\ell$, $m$, $n$ be non-negative integers such that
$\ell \leq m \leq n$, $a_1$, $\ldots$, $a_n$ distinct propositional
atoms and %
$l \in \{a_1, \neg a_1\}$.
A \emph{choice rule} is an expression of the form
$\{a_1; \ldots; a_\ell \} \hsep a_{\ell+1}, \ldots, a_m, \neg a_{m+1},
\ldots, \neg a_n$,
\noindent intuitively some subset of $\{a_1, \ldots, a_\ell\}$ is true
if all atoms $a_{\ell+1}, \ldots, a_m$ are true and there is no
evidence that any atom of $a_{m+1}, \ldots, a_n$ is true.
A \emph{disjunctive rule} is of the form
$a_1\por \cdots \por a_\ell \hsep a_{\ell+1}, \ldots, a_{m}, \neg$
$a_{m+1}$, $\ldots$, $\neg a_n$, %
\noindent intuitively at least one atom of $a_1, \ldots, a_\ell$ must
be true if all atoms $a_{\ell+1}, \ldots, a_m$ are true and there is
no evidence that any atom of $a_{m+1}, \ldots, a_n$ is true.
An \emph{optimization rule} is an expression of the form
$\optimize l$. 
with the intuitive meaning that when literal~$l$ is true, this incurs
a penalty of weight~$w$.
\longversion{
} 
A \emph{rule} is either a disjunctive, a choice, %
or an optimization rule.
For a choice or disjunctive %
rule~$r$, let $H_r \eqdef \{a_1, \ldots, a_\ell\}$,
$B^+_r \eqdef \{a_{\ell+1}, \ldots, a_{m}\}$,
and $B^-_r \eqdef \{a_{m+1}, \ldots, a_n\}$.
Usually, if $B^-_r \cup B^+_r = \emptyset$ we write for a rule~$r$
simply~$H_r$ instead of $H_r \hsep$.
For an optimization rule~$r$, %
if $l=a_1$,
let $B^+_r\eqdef \{a_1\}$ and $B^-_r\eqdef \emptyset$; and if
$l=\neg a_1$, let $B^-_r\eqdef \{a_1\}$ and $B^+_r\eqdef \emptyset$.
For a rule $r$, let $\at(r) \eqdef H_r \cup B^+_r \cup B^-_r$ denote
its \emph{atoms} and
$B_r \eqdef B^+_r \cup \SB \neg b \SM b \in B^-_r \SE$ 
its \emph{body}.
Let a \emph{program}~$\prog$ be a set of rules, and $\at(\prog) \eqdef \bigcup_{r\in\prog}\at(r)$ %
denote its atoms. %
and let $\choice(\prog)$, $\disj(\prog)$, and
$\opt(\prog)$ %
denote the set of all choice, disjunctive, and optimization
rules %
in~$\prog$, respectively.
\longversion{%
} 
A set $M \subseteq \at(\prog)$ \emph{satisfies} a rule~$r$ if
(i)~$(H_r \cup B^-_r) \cap M \neq \emptyset$ or $B^+_r \not\subseteq
M$ for~$r\in \disj(\prog)$ or
(ii) $r$ is a choice or optimization rule. %
$M$ is a model of~$\prog$, denoted by $M \models \prog$, if $M$
satisfies every rule~$r \in \prog$. 
\longversion{%
}
The \emph{reduct}~$r^M$  
(i)~of a choice rule~$r$ is the set
$\SB a \leftarrow B^+_r \SM a \in H_r \cap M, B^-_r \cap M =
\emptyset\SE$ of rules, and
(ii)~of a disjunctive rule~$r$ is the singleton
$\SB H_r \leftarrow B^+_r \SM B^-_r \cap M = \emptyset\SE$. %
$\prog^M := \bigcup_{r\in\prog}r^M$ is called \emph{GL
  reduct} of $\prog$ with respect to~$M$.
A set~$M \subseteq \at(\prog)$ is an \emph{answer set} of
$\prog$ if (i) %
$M \models \prog$ and (ii) there is no
$M' \subsetneq M$ such that $M' \models \prog^M$, that is,
$M$ is \emph{subset minimal with respect to $\prog^M$}.
\longversion{%
}%
We call
$\oo(\prog,M)\eqdef\allowbreak\Card{\{r\mid r\in\prog, r \text{ is an optimization rule}, (B^+_r \cap M) \cup (B^-_r \setminus M) \neq \emptyset\}}$
the \emph{cost} of answer set $M$ for $\prog$. An answer set~$M$ of $\prog$ is
\emph{optimal} if its cost is minimal over all answer sets.

\begin{example}%
  \label{ex:running}
  Consider program\\
  $P = \SB %
  \overbrace{\{ e_{ab} \}}^{r_{ab}}\rsep %
  \overbrace{\{ e_{bc} \}}^{r_{bc}}\rsep %
  \overbrace{\{ e_{cd} \}}^{r_{cd}}\rsep %
  \overbrace{\{ e_{ad} \}}^{r_{ad}} \rsep %
  \overbrace{ a_b \hsep e_{ab} }^{r_b}\rsep %
  \overbrace{ a_d \hsep e_{ad} }^{r_d}\rsep %
  \overbrace{ a_c \hsep a_b, e_{bc} }^{r_{c1}}\rsep %
  \overbrace{ a_c \hsep a_d, e_{cd} }^{r_{c2}}\rsep %
  \overbrace{ \hsep \neg a_c }^{r_{\neg c}}%
  \SE.$\\
  The set~$A = \{e_{ab}, e_{bc}, a_b, a_c\}$ is an answer set of~$\prog$, since
  $\{e_{ab}, e_{bc}, a_b, a_c\}$ is the only minimal model of~$P^A = \SB %
  e_{ab} \leftarrow \rsep %
  e_{bc} \leftarrow \rsep %
  a_b \hsep e_{ab} \rsep %
  a_d \hsep e_{ad} \rsep %
  a_c \hsep a_b, e_{bc} \rsep %
  a_c \hsep a_d, e_{cd} \SE.$
  Then, consider program
  $R = \{ a \por c \leftarrow b\rsep b \leftarrow c, \pnot g \rsep c
  \leftarrow a\rsep b \por c \leftarrow e \rsep h \por i \leftarrow g,
  \pnot c \rsep a \por b \rsep g \leftarrow \pnot i \rsep c \rsep
  \{d\} \leftarrow g \}$.  The set $B =\{ b, c, d, g \}$ is an answer
  set of $R$ since $\{b,c,d, g\}$ and $\{a,c,d,g\}$ are the minimal
  models of
  $R^B= \SB a \por c \leftarrow b \rsep c \leftarrow a\rsep b \por c
  \leftarrow e \rsep a \por b \rsep g \rsep c \rsep d \leftarrow g
  \SE$.
\end{example}

\longversion{%
}%
\noindent Given a program~$\prog$, we consider the problems of computing an answer
set (called \AspComp), outputting the number of optimal answer
sets (called \AspCountO), 
and listing all optimal answer sets of~$\prog$ (called
\AspEnum). Further, given a propositional formula~$F$ and an
atom~$sol$, we use the entailment problem of listing every
subset-minimal model~$M$ of $F$ with $sol\in M$ (called \MSatEnum).

\shortversion{\smallskip \noindent \textit{Graph Representations of Programs.} }%
\longversion{\subsection{Graph Representations of Programs}}%
In order to use TDs for ASP solving, we need dedicated graph
representations of programs.
The \emph{incidence graph}~$I(\prog)$ of $\prog$ is the bipartite
graph that has the atoms and rules of~$\prog$ as vertices and an
edge~$a\, r$ if $a \in \at(r)$ for some
rule~$r \in \prog$~\cite{FichteEtAl17}.
The \emph{semi-incidence graph}~$S(\prog)$ of $\prog$ is a graph that
has the atoms and rules of~$\prog$ as vertices and (i)~an edge~$a\, r$
if $a \in \at(r)$ for some rule~$r \in \prog$ as well as (ii)~an
edge~$a\, b$ for disjoint atoms~$a,b \in H_r$
where~$r\in\prog$ is a choice rule. %
Since for every program~$P$ the incidence
graph~$I(P)$ is a subgraph of the semi-incidence graph, we have that
$\tw{I(\prog)} \leq \tw{S(\prog)}$.
Further, by definition of a TD and the construction of a
semi-incidence graph that head atoms of choice rules, respectively,
occur in at least one common bag of the TD.

 \shortversion{\smallskip \noindent \textit{Sub-Programs.} }%
 \longversion{\subsection{Sub-Programs}}%
 Let ${\cal T} = (T, \chi)$ be a nice TD of graph
 representation~$S(\prog)$ of a program
 $\prog$. Further, let $T = (N,\cdot,n)$ and $t \in N$.
 The \emph{bag-program} is defined as
 $\prog_t \eqdef \prog \cap \chi(t)$.
 Further, the set~$\atto \eqdef \SB a \SM a \in \at(\prog) \cap \chi(t'), t' \in
 \post(T,t) \SE$ is called \emph{atoms below~$t$}, the \emph{program below $t$}
 is defined as $\progt{t} \eqdef \SB r \SM r \in \prog_{t'}, t' \in \post(T,t) \SE$,
 and the \emph{program strictly below $t$} is $\progtneq{t}\eqdef
 \progt{t}\setminus \prog_t$. It holds that $\progt{n} = \progtneq{n} = \prog$ and
 $\att{n}=\at(\prog)$. 
\section{A Single Pass DP Algorithm}
\label{sec:algo:dp}

A dynamic programming based \ASP solver, such as
DynASP2~\cite{FichteEtAl17}, splits the input program~$\prog$ into
``bag-programs'' based on the structure of a given nice tree
decomposition for~$\prog$ and evaluates~$\prog$ in parts, thereby
storing the results in tables for each TD node.  More precisely, the
algorithm works as outlined on the left and middle of Figure~\ref{fig:dp-approach} and encompasses the following steps:
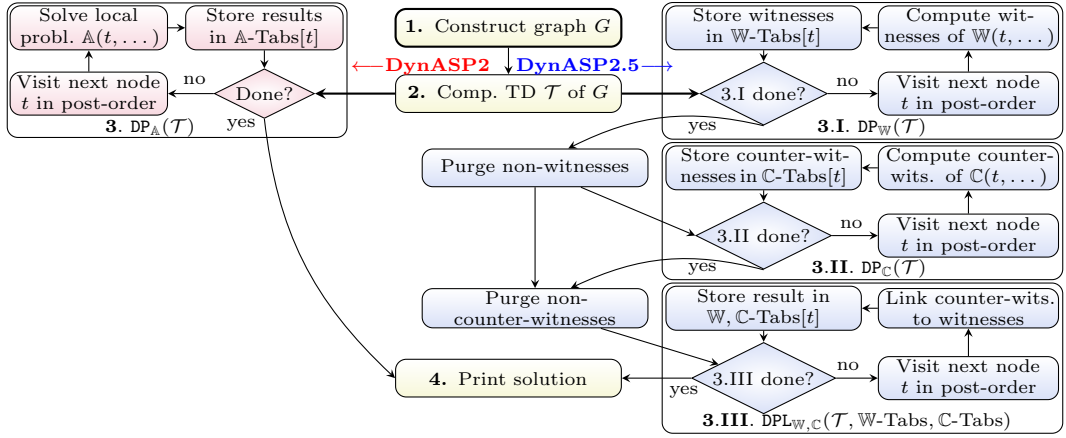
\begin{figure}[t]
  \centering %
    \begin{tikzpicture}[
	rounded corners,
	rect/.style={
		draw,
		rectangle,
		minimum height=5mm,
		top color=white,
		middle color=white,
	},
	dflatrect/.style={
		rect,
		bottom color=red!80!blue!15,
	},
	asprect/.style={
		rect,
		bottom color=blue!80!green!15,
	},
	decomprect/.style={
		rect,
		bottom color=yellow!80!black!15,
	},
	diam/.style={
		draw,
		diamond,
		shape aspect=2,
		sharp corners,
		inner color=white,
		outer color=red!80!blue!15,
	},
	diam25/.style={
		draw,
		diamond,
		shape aspect=2,
		sharp corners,
		inner color=white,
		outer color=blue!80!green!15,
	},
	font=\scriptsize,
	>=stealth, inner sep=1
]

\matrix[column sep=2mm, row sep=1.5mm]{

\node [dflatrect, text width=20mm, text centered] (solve1) {Solve local probl.\ ${\AlgA}(t,\dots)$};
&\node [dflatrect, text width=20mm, text centered] (populate1) {Store results in $\Tabs{t}$ }; &
\node[xshift=2mm]{};&
\node [decomprect, xshift=2.5mm, text centered, text width=29mm, thick] (parse) {\textbf{1.} Construct graph~$G$};
&
\node [asprect, text width=25mm,text centered] (populate) {Store witnesses in $\ATabs{{\AlgW}}{t}$}; &
\node [asprect, text width=23mm, text centered] (solve) {Compute witnesses of ${\AlgW}(t,\dots)$};
\\
\node [dflatrect, text width=20mm, text centered] (traverse1) {Visit next node $t$ in post-order};
&\node [diam] (done1) {Done?};
\node [above left, xshift=-1mm, yshift=1.25mm, inner sep=0] at (done1.west) {no};
\node [below right, xshift=-5mm, yshift=0mm, inner sep=0] at (done1.south) {yes};
\node[]{};&
&\node [decomprect,xshift=2.5mm, text width=29mm, text centered] (decompose) {\textbf{2.} Comp.\ TD~${\cal T}$ of~$G$}; 
&
\node [diam25, inner sep=1] (done) {3.I done?};
\node [above right, xshift=1mm, yshift=1mm, inner sep=0] at (done.east) {no};
\node [below right, xshift=-10mm, yshift=0.5mm, inner sep=0] at (done.south) {yes};
&
\node [asprect, text width=23mm, text centered] (traverse) {Visit next node $t$ in post-order};
\\
&&
\node[]{};&
\node [asprect, text width=29mm, xshift=6mm, text centered] (clean) {Purge non-witnesses};&
\node [asprect, text width=25mm, text centered] (populate2) {Store counter-wit-nesses\hspace{0.17em}in\hspace{0.2em}$\ATabs{{\AlgC}}{t}$}; &
\node [asprect, text width=23mm, text centered] (solve2) {Compute counter-wits. of ${\AlgC}(t, \dots)$};
\\
&&
\node[]{};&
&\node [diam25, inner sep=1] (done2) {3.II done?};
\node [above right, xshift=1mm, yshift=1mm, inner sep=0] at (done2.east) {no};
\node [below right, xshift=-10mm, yshift=1mm, inner sep=0] at (done2.south) {yes};
&
\node [asprect, text width=23mm, text centered] (traverse2) {Visit next node $t$ in post-order};
\\
&&
\node[]{};&
\node [asprect, text width=29mm, xshift=6mm,text centered] (clean2) {Purge non-counter-witnesses};
&
\node [asprect, text width=25mm,text centered] (populate3) {Store result in $\ATabs{{\AlgW},{\AlgC}}{t}$};
&
\node [asprect, text width=23mm, text centered] (solve3) {Link counter-wits. to witnesses};
\\
&&
\node[]{};&
\node [decomprect, xshift=2.5mm,text width=29mm, text centered] (reconstruct) {\textbf{4.} Print solution};
&
\node [diam25, inner sep=1] (done3) {3.III done?};
\node [above right, xshift=0mm, yshift=1mm, inner sep=0] at (done3.east) {no};
\node [below left, xshift=1mm, yshift=-1mm, inner sep=0] at (done3.west) {\;yes};
&
\node [asprect, text width=23mm, text centered] (traverse3) {Visit next node $t$ in post-order};
\\};
\node [below=of populate,xshift=1.3cm,yshift=1.64cm,text width=52mm, text height=16.8mm, draw, align=right] {$\mathbf{3.I.}~ \dpa_{\AlgW}({\cal T})$\qquad\qquad\qquad\;\vspace{-0.00em}};
\node [below=of populate2,xshift=1.3cm,yshift=1.64cm,text width=52mm, text height=17mm, draw, align=right] {$\mathbf{3.II.}~ \dpa_{\AlgC}({\cal T})$\qquad\qquad\qquad\;\vspace{-0.05em}};
\node [below=of populate3,xshift=1.3cm,yshift=1.64cm,text width=52mm, text height=18.65mm, draw, align=right] {$\mathbf{3.III.}~ \cw_{\AlgW, {\AlgC}}({\cal T}, \ATab{\AlgW}, \ATab{\AlgC})$\qquad\;\vspace{-0.05em}};
\node [xshift=-4.70cm,yshift=1.85cm,text width=44mm, text height=16.8mm, draw, align=right] {$\mathbf{3.}~ \dpa_{\AlgA}({\cal T})$\qquad\qquad\qquad\quad\vspace{-0.10em}};
\node [xshift=-1.5cm,yshift=1.9cm] {\textcolor{red}{$\longleftarrow$\textbf{DynASP2}}};
\node [xshift=0.8cm,yshift=1.9cm] {\textcolor{blue}{\textbf{DynASP2.5}$\longrightarrow$}};

\draw (parse) edge[->] (decompose);
\draw (decompose) edge[->,thick] (done);
\draw (decompose) edge[->,thick] (done1);
\draw (done) edge[->] (traverse);
\draw (done.south) edge[->, bend right=20]  (clean);
\draw (clean) edge[->] (clean2);
\draw (done3.west) edge[->] (reconstruct);
\draw (traverse) edge[->] (solve);
\draw (solve) edge[->] (populate);
\draw (populate) edge[->] (done);

\draw (populate2) edge[->] (done2);
\draw (solve2) edge[->] (populate2);
\draw (traverse2) edge[->] (solve2);
\draw (done2) edge[->] (traverse2);
\draw (done2.south) edge[->,bend right=20] (clean2);
\draw (clean) edge[->] (done2.west);

\draw (done3) edge[->] (traverse3);
\draw (populate3) edge[->] (done3);
\draw (solve3) edge[->] (populate3);
\draw (traverse3) edge[->] (solve3);

\draw (clean2) edge[->] (done3);

\draw (solve1) edge[->] (populate1);
\draw (populate1) edge[->] (done1);
\draw (done1) edge[->] (traverse1);
\draw (traverse1) edge[->] (solve1);
\draw (done1.south) edge[->, bend right=20] (reconstruct.west);
\end{tikzpicture}

    \caption{Control flow for DP-based \ASP solver (DynASP2, left) and for {\dynaspplus{\cdot}} (right).}
\label{fig:dp-approach}
\label{fig:dp-advanced-approach}

\end{figure}

\begin{algorithm}[t]
  \KwData{Table algorithm $\AlgA$, nice TD~$\TTT=(T,\chi)$ with
    $T=(N,\cdot,n)$ of $G(\prog)$ according to $\AlgA$.}%
  \KwResult{$\ATab{\AlgA}$: maps each TD node~$t\in T$ to some computed
    table~$\tau_t$. } %
  $\Tab{}_t \eqdef \SB \ATab{\AlgA}[t'] \SM t' \text{ is a child of $t$ in
      $T$}\SE$\;\vspace{0.2em} %
  \For{\text{\normalfont iterate} $t$ in \text{\normalfont post-order}(T,n)}{\vspace{-0.05em}%
    %
    $\ATab{\AlgA}[t] \leftarrow {\AlgA}(t,\chi(t),\prog_t,\atto,\Tab{}_t)$\; %
    \vspace{-0.5em} }\vspace{-0.2em}%
  \caption{Algorithm ${\dpa}_{\AlgA}({\cal T})$ for Dynamic
    Programming on TD ${\cal T}$ for ASP~\protect\cite{FichteEtAl17}.}
\label{fig:dpontd}
\end{algorithm}
%

\begin{enumerate}
\item Construct a graph representation~$G(\prog)$ of the given input
  program~$\prog$.
\item Compute a TD~$\TTT$ of the graph~$G(\prog)$ by means of some heuristic, thereby
  decomposing~$G(P)$ into several smaller parts and fixing an
  ordering in which~$\prog$ will be evaluated.
\item\label{step:td:solving} Algorithm~\ref{fig:dpontd} --
  ${\dpa}_{\AlgA}({\cal T})$ -- sketches the general scheme for this
  step, assuming that an algorithm~${\AlgA}$, which highly depends on
  the graph representation, is given. We usually call $\AlgA$ the
  \emph{table algorithm}\footnote{The table algorithm~$\INC$ for example is given in
  Algorithm~\ref{fig:incinc}.}. %
  For every node~$t \in T$ in the tree
  decomposition~$\TTT = ((T,E,n),\chi)$ (in a bottom-up traversal),
  run~${\AlgA}$ and compute~$\Tabs{t}$, which are sets of tuples (or
  \emph{rows} for short).  Intuitively, algorithm~$\AlgA$ transforms tables of child nodes of~$t$ to the current node, 
  and solves a
  ``local problem'' using bag-program~$\prog_t$. The algorithm thereby
  computes %
  (i)~sets of atoms called (local) \emph{witness
    sets} %
  and (ii)~for each local witness set~$M$ subsets of~$M$ called
  \emph{counter-witness sets}~\cite{FichteEtAl17}, and directly follows the definition of answer sets being (i)~models of
  $\prog$ and (ii)~subset minimal with respect to~$\prog^M$.
\item %
  For root~$n$ interpret the table~$\Tabs{n}$ (and tables of children,
  if necessary) and print the solution to the considered \ASP
  problem.%
\end{enumerate}

\noindent Next, we propose a new table algorithm~($\INC$) for programs
without optimization rules. Since our algorithm trivially extends to
counting and optimization rules by earlier work~\cite{FichteEtAl17},
we omit such rules. The table algorithm~${\INC}$ employs the
semi-incidence graph and is depicted in Algorithm~\ref{fig:incinc}.
$\dpa_{\INC}$ merges two earlier algorithms for the primal and
incidence graph~\cite{FichteEtAl17} resulting in slightly different
worst case runtime bounds (c.f., Theorem~\ref{thm:inc:runtime}).
Our table algorithm~\INC computes and stores (i)~sets of atoms
(witnesses) that are relevant for the \SAT part (finding a model of
the program) and (ii)~sets of atoms (counter-witnesses) that are
relevant for the \UNSAT part (checking for minimality). In addition,
we need to store for each set of witnesses as well as its set of
counter-witnesses satisfiability states (\emph{sat-states} for short).
For the following reason: 
By Definition of TDs and the semi-incidence graph, it is true for
every atom~$a$ and every rule~$r$ of a program that if atom~$a$ occurs
in rule~$r$, then $a$ and $r$ occur together in at least one bag of
the TD. In consequence, the table algorithm encounters every
occurrence of an atom in any rule. In the end, on removal of~$r$, we
have to ensure that $r$ is among the rules that are already
satisfied. However, we need to keep track whether a witness
\emph{satisfies} a rule, because not all atoms that occur in a rule
occur together in exactly one bag. Hence, when our algorithm traverses
the TD and an atom is forgotten we still need to store this
\emph{sat-state}, as setting the forgotten atom to a certain truth
value influences the satisfiability of the rule.
Since the semi-incidence graph contains a clique on every set~$A$ of
atoms that occur together in %
choice rule head, %
those atoms~$A$ occur together in a common bag of any TD of the
semi-incidence graph. For that reason, we do \emph{not} need to
incorporate %
choice rules into the satisfiability state, in
contrast to the algorithm for the incidence graph~\cite{FichteEtAl17}.
We can see witness sets together with its sat-state as \emph{witness}.
Then, in Algorithm~\ref{fig:incinc} (\INC) a row in the
table~$\tab{t}$ is a triple~$\langle M, \sigma, \CCC \rangle$.  The
set~$M \subseteq \at(\prog)\cap\chi(t)$ represents a witness set. 
The family~$\CCC$ of sets concerns counter-witnesses, which we will
discuss in more detail below.
The sat-state~$\sigma$ for $M$ represents rules of $\chi(t)$ satisfied
by a superset of~$M$.  Hence, $M$ witnesses a model~$M'\supseteq M$
where $M' \models \progtneq{t} \cup \sigma$.  
We use binary operator~$\cup$ to combine sat-states, which ensures
that rules satisfied in at least one operand remain satisfied.
We compute a new sat-state~$\sigma$ from a sat-state and satisfied
rules, formally,
$\SSR(\dot{R},M)\eqdef \{r \mid (r,{R})\in\dot{R}, M
\models {R} \}$ for $M \subseteq \chi(t) \setminus \prog_t$ and
program~$\dot{R}(r)$ constructed by~$\dot{R}$, mapping rules
to local-programs (Definition~\ref{def:bagprogram}).
\begin{definition}\label{def:bagprogram}%
  Let $\prog$ be a program, $\TTT=(\cdot,\chi)$ be a TD of $S(\prog)$,
  $t$ be a node of $\TTT$ and ${R} \subseteq \prog_t$.
  The \emph{local-program}~${R}^{(t)}$ is obtained
  from~${R} \cup \{ \leftarrow B_r \mid r\in R \text{ is a choice rule, } %
  H_r \subsetneq \att{t}\}$\shortversion{\footnote{%
      We require to add
      $\{\leftarrow B_r \mid r\in R \text{ is a choice rule, } H_r \subsetneq
      \att{t}\}$ in order to decide satisfiability for corner cases of
      choice rules involving counter-witnesses of Line~3 in
      Algorithm~\ref{fig:incinc}.}}%
  \longversion{\footnote{%
      We require to add
      $\{\leftarrow B_r \mid r\in R \text{ is a choice rule, } H_r \subsetneq
      \att{t}\}$ in order to decide satisfiability for corner cases of
      choice rules involving counter-witnesses of Line~3 in
      Algorithm~\ref{fig:incinc}.}}\footnoteitext{\label{foot:sigma}\label{foot:abrevtwo}%
  For set $S$ and element $s$, we denote
  $\MAIR{S}{s}\hspace{-0.15em}\eqdef\hspace{-0.1em}S \cup \{s\}$ and
  $\MARR{S}{s}\hspace{-0.15em}\eqdef\hspace{-0.1em}S\setminus\{s\}$.%
} %
  by %
removing from every rule all
    literals~$a, \neg a$ with $a \not\in \chi(t)$. %
  \shortversion{}%
  We
  define~$\dot{R}^{(t)}: {R} \rightarrow 2^{{R}^{(t)}}$ by
  $\dot{R}^{(t)}(r) \eqdef \{r\}^{(t)}$ for
  $r\in {R}$.
\end{definition}%
\begin{example}
  Observe $\prog_{t_4}^{(t_4)} = \{\leftarrow e_{bc}, r_b\}$ and
  $\prog_{t_5}^{(t_5)} = \{c \leftarrow\}$ for $\prog_{t_4}$ and
  $\prog_{t_5}$ of Figure~\ref{fig:running1_inc}. %
\end{example}
\renewcommand{\eqdef}{\leftarrow}
\begin{algorithm}[t]
   \KwData{Bag $\chi_t$, bag-program $\prog_t$, atoms-below $\atto$, child tables $\Tab{}_t$ of $t$.{~\bf Out:} Tab.~$\tab{t}$.} 
   %

   %
   %
   \lIf(\tcc*[f]{\hspace{-0.05em}Abbreviations see Footnote~\ref{foot:sigma}.\hspace{-0.05em}}){$\type(t) = \leaf$}{%
     $\tab{t} \eqdef \Big\{ \Big\langle
     \tuplecolor{\inputPredColor}{\emptyset},
     \tuplecolor{\statePredColor}{\emptyfunc},
     ~\tuplecolor{\outputPredColor}{\emptyset}\trash{, \tuplecolor{\specialPredColor}{\{()\}}} \Big\rangle \Big\}$
   \vspace{-0.00em}}%
   %
   %
   %
   \uElseIf{$\type(t) = \intr$, $a \in \chi_t \setminus \prog_t$ is introduced and $\tau'\in \Tab{}_t$}{     
     \vspace{-0.25em}$\hspace{-0.9em}\tab{t} \eqdef \Big\{ \Big\langle \tuplecolor{\inputPredColor}{\MAI{M}}, \tuplecolor{\statePredColor}{\sigma \cup
     \SSR(\dot\prog_t^{(t)},\MAI{M}) },~\tuplecolor{\outputPredColor}{\{ \langle\MAI{C},
       \rho \cup \SSR(\dot\prog_t^{(t,{\MAI{M}})},\MAI{C})\rangle \mid \langle C,
       \rho\rangle \in \CCC \}~\cup}$

  \vspace{-0.15em}%
       \makebox[0.25cm][l]{}\makebox[10.80cm][l]{$\tuplecolor{\outputPredColor}{\{ \langle C,
       \rho \cup \SSR(\dot\prog_t^{(t,{\MAI{M}})},C)\rangle \mid \langle C, \rho
       \rangle \in \CCC \}~\cup~}\tuplecolor{\outputPredColor}{\{\langle M,
       \sigma \cup \SSR(\dot\prog_t^{(t,{\MAI{M}})},M)\rangle \}}\trash{, \tuplecolor{\specialPredColor}{\{(\tabval)\}}}\Big\rangle$}%
	   $\Bigm|\;\trash{\tabval=}\langle \tuplecolor{\inputPredColor}{M}, \tuplecolor{\statePredColor}{\sigma}, \tuplecolor{\outputPredColor}{\CCC}\trash{, \tuplecolor{\specialPredColor}{\cdot}} \rangle \in \tab{}'\Big\}$

\vspace{-0.10em}%
	
	\makebox[0.41cm][l]{$\;\cup$}\makebox[10.64cm][l]{$\Big\{ \Big\langle \tuplecolor{\inputPredColor}{M},
        \tuplecolor{\statePredColor}{\sigma \cup
          \SSR(\dot\prog_t^{(t)},M)},~
        \tuplecolor{\outputPredColor}{\{ \langle C, \rho \cup
          \SSR(\dot\prog_t^{(t,M)},C) \rangle \mid \langle C,
          \rho\rangle \in \CCC \}}$%
        $\trash{,\tuplecolor{\specialPredColor}{\{(\tabval)\}}}\Big\rangle$}%
      $\Bigm|\;\trash{\tabval = }\langle
      \tuplecolor{\inputPredColor}{M},
      \tuplecolor{\statePredColor}{\sigma},
      \tuplecolor{\outputPredColor}{\CCC}\trash{,
        \tuplecolor{\specialPredColor}{\cdot}} \rangle \in \tab{}'
      \Big\}$}

     %
   %
   %
   \uElseIf{$\type(t) = \intr$, $r \in \chi_t \cap \prog_t$ is introduced and $\tau'\in \Tab{}_t$}{
     $\hspace{-0.9em}\tab{t} \eqdef$ \makebox[10.553cm][l]{$\Big\{ \Big\langle \tuplecolor{\inputPredColor}{M}, \tuplecolor{\statePredColor}{{\sigma} \cup
     \SSR(\{\dot r\}^{(t)},M)},~\tuplecolor{\outputPredColor}{\{ \langle C, {\rho} \cup
       \SSR(\{\dot r\}^{(t,M)},C) \rangle \mid \langle C, \rho \rangle \in \CCC\}}\trash{, \tuplecolor{\specialPredColor}{\{(\tabval)\}}} \Big\rangle$}%
	   $\Bigm|\; \trash{\tabval=}\langle \tuplecolor{\inputPredColor}{M}, \tuplecolor{\statePredColor}{\sigma}, \tuplecolor{\outputPredColor}{\CCC} \rangle \in \tab{}' \Big\}$
 
     %
   \vspace{-0.10em}}
   %
   %
   %
   \uElseIf{$\type(t) = \rem$, $a \not\in \chi_t$ is removed atom and $\tau'\in \Tab{}_t$}{
       $\hspace{-0.9em}\tab{t} \eqdef$ \makebox[10.553cm][l]{$\trash{\compr(}\Big\{ \Big\langle \tuplecolor{\inputPredColor}{\MAR{M}},\tuplecolor{\statePredColor}{\sigma
       },~\tuplecolor{\outputPredColor}{\{ \langle \MAR{C}, \rho \rangle
       \mid \langle C, \rho \rangle \in \CCC \}}\trash{, \tuplecolor{\specialPredColor}{\{(\tabval)\}}} \Big\rangle$}%
       $\Bigm|\;\trash{\tabval =} \langle \tuplecolor{\inputPredColor}{M}, \tuplecolor{\statePredColor}{\sigma}, \tuplecolor{\outputPredColor}{\CCC}\trash{, \tuplecolor{\specialPredColor}{\cdot}} \rangle \in \tab{}'\Big\}\trash{)}$
       %
%
     %
   }%
   %
   %
   \vspace{-0.07em}
   \uElseIf{$\type(t) = \rem$, $r \not\in \chi_t$ is removed rule and $\tau'\in \Tab{}_t$}{
     $\hspace{-0.9em}\tab{t} \eqdef$ \makebox[9.66cm][l]{$\Big\{\Big\langle \tuplecolor{\inputPredColor}{M},
       \tuplecolor{\statePredColor}{\MARR{\sigma}{r}},$~%
       $\tuplecolor{\outputPredColor}{\big\{ \langle C, \MARR{\rho}{r} \rangle \mid \langle C, \rho
       \rangle \in \CCC, r\in \rho  
       \big\}}\trash{, \tuplecolor{\specialPredColor}{\{(\tabval)\}}} \Big\rangle$}%
%
     $\Bigm|\;\trash{\tabval =} \langle \tuplecolor{\inputPredColor}{M}, \tuplecolor{\statePredColor}{\sigma}, \tuplecolor{\outputPredColor}{\CCC}\trash{, \tuplecolor{\specialPredColor}{\cdot}} \rangle \in \tab{}', r\in\sigma 
     \hspace{-0.015em} \Big\}$ 
	 }\vspace{-0.05em}

   %
   %
   \uElseIf{$\type(t) = \join$ and $\tau', \tau'' \in \Tab{}_t$ with $\tau' \neq \tau''$}{
     $\hspace{-0.9em}\tab{t} \eqdef\trash{\compr(}\Big\{ \Big\langle \tuplecolor{\inputPredColor}{M}, \tuplecolor{\statePredColor}{\sigma' \cup \sigma''},~\tuplecolor{\outputPredColor}{\{
     \langle C, \rho' \cup \rho''\rangle \mid \langle C, \rho'
     \rangle \in \CCC', \langle C, \rho'' \rangle \in \CCC''\}~\cup~\tuplecolor{\outputPredColor}{\{ \langle M,
     \rho \cup \sigma''\rangle \mid \langle M, \rho \rangle \in \CCC'
     \}~\cup}}$%
  
   \makebox[0.55cm][l]{}%
    %
   \makebox[7.79cm][l]{$\tuplecolor{\outputPredColor}{\{ \langle M, \sigma' \cup \rho
     \rangle \mid \langle M, \rho \rangle \in \CCC''
     \}}\trash{,}\Big\rangle$}%
   $\Bigm|\;\trash{\tabval'=}\langle
   \tuplecolor{\inputPredColor}{M},
   \tuplecolor{\statePredColor}{\sigma'},
   \tuplecolor{\outputPredColor}{\CCC'}\trash{,
     \tuplecolor{\specialPredColor}{\cdot}} \rangle\in\tab{}',
   \trash{\tabval''=}\langle \tuplecolor{\inputPredColor}{M},
   \tuplecolor{\statePredColor}{\sigma''},
   \tuplecolor{\outputPredColor}{\CCC''}\trash{,
     \tuplecolor{\specialPredColor}{\cdot}}
   \rangle\in\tab{}''\trash{, \tab{}' \prec
     \tab{}''}\Big\}\trash{)}$ 
 } \vspace{-0.3em}
\caption{Table algorithm~$\algo{SINC}(t,\chi_t,\prog_t,\atto,\Tab{}_t)$.}
\label{fig:incinc}
\end{algorithm}
%
\renewcommand{\eqdef}{{\ensuremath{\,\mathrel{\mathop:}=}}}
In Example~\ref{ex:sinc:main} we give an idea how we compute models of
a given program using the semi-incidence graph.  The resulting
algorithm \INCSAT is obtained from \INC, by taking only the first two
row positions (red and green parts).
The remaining position (blue part), can be seen as an
algorithm~(\INCCOUNTERSAT) that computes counter-witnesses (see
Example~\ref{ex:running_counter}).
Note that we discuss selected cases, and we assume row numbers in each
table~$\tab{t}$,~i.e., the $i\,^{th}$-row corresponds to
$\tabval_{t.i} = \langle M_{t.i}, \sigma_{t.i} \rangle$.

\begin{example}\label{ex:sinc:main}
  Consider program~$\prog$ from Example~\ref{ex:running},
  TD~$\TTT=(\cdot, \chi)$ in Figure~\ref{fig:running1_inc}, and the
  tables~$\tab{1}$,\nolinebreak $\ldots$, $\tab{34}$, which illustrate
  computation results obtained during post-order traversal of
  ${\cal T}$ by $\dpa_{\PRIMSAT}$. \longversion{Note that}
  Figure~\ref{fig:running1_inc} (left) does not show every
  intermediate node of TD~${\cal T}$.
  Table $\tab{1}= \{\langle \emptyset, \emptyfunc \rangle \}$ as
  $\type(t_1)=\leaf$~(see Algorithm~\ref{fig:incinc} L1).  Table
  $\tab{3}$ is obtained via introducing rule~$r_{ab}$, after
  introducing atom~$e_{ab}$ %
  ($\type(t_2) = \type(t_3) =\intr$). %
  It contains two rows due to two possible truth assignments using
  atom~$e_{ab}$ (L3--5).  Observe that rule~$r_{ab}$ is satisfied in
  both rows $M_{3.1}$ and $M_{3.2}$, since the head of choice
  rule~$r_{ab}$ is in~$\att{t_3}$ (see L7 and
  Definition~\ref{def:bagprogram}).  Intuitively, whenever a rule~$r$
  is proven to be satisfiable, sat-state~$\sigma_{t.i}$ marks~$r$
  \emph{satisfiable} since an atom of a rule of~$S(\prog)$ might only
  occur in one TD bag.
  Consider table~$\tab{4}$ with $\type(t_4)=\rem$ and
  $r_{ab}\in\chi(t_{3})\setminus\chi(t_{4})$.  By
  definition\shortversion{,} \longversion{(TDs and semi-incidence
    graph),} we have encountered every occurrence of any atom
  in~$r_{ab}$. \shortversion{Thus, }\longversion{In consequence,}
  \INCSAT enforces that only rows where $r_{ab}$ is marked satisfiable
  in~$\tab{3}$, are considered for table~$\tab{4}$.  The resulting
  table~$\tab{4}$ consists of rows of~$\tab{3}$ with
  $\sigma_{4.i}=\emptyset$, where rule~$r_{ab}$ is proven satisfied
  ($r_{ab}\in\sigma_{3.1}, \sigma_{3.2}$, see L~11).  Note that
  between nodes~$t_6$ and~$t_{10}$, an atom and rule remove as well as
  an atom and rule introduce node is placed. Observe that the second
  row~$\tabval_{6.2}=\langle M_{6.2},\sigma_{6.2}\rangle \in\tab{6}$
  does not have a ``successor row'' in~$\tab{10}$,
  since~$r_b\not\in\sigma_{6.2}$.
  Intuitively, join node~$t_{34}$ joins only common witness sets
  in~$\tab{17}$ and~$\tab{33}$
  with~$\chi(t_{17})=\chi(t_{33})=\chi(t_{34})$. In general, a join
  node marks rules satisfied, which are marked satisfied in at least
  one child (see L13--14).
\end{example}
\begin{figure}[t]%
\centering %
\vspace{-0.2em}
\includegraphics[scale=1.20]{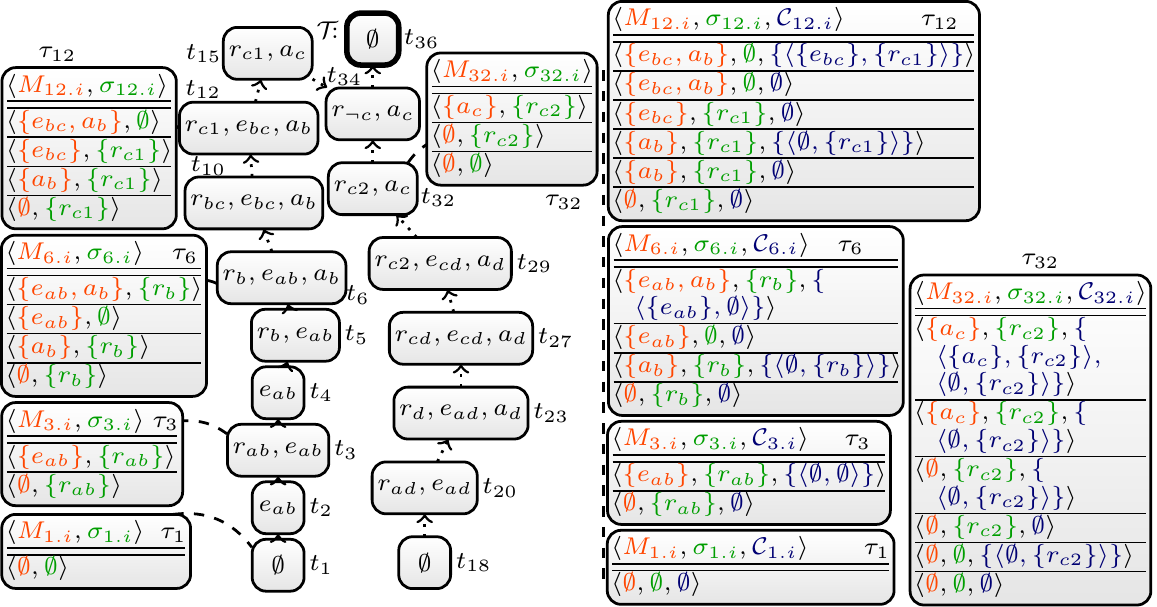}%
\caption{A TD~$\TTT$ of the semi-incidence graph~$S(P)$ for
  program~$P$ from Example~\ref{ex:running} (center).  Selected DP
  tables after~$\dpa_\INCSAT$~(left) and after~$\dpa_\INC$~(right) for
  nice TD~$\TTT$.}
\label{fig:running1_inc}
\end{figure}%
Since we already explained how to obtain models, we only briefly
describe how to compute counter-witnesses. 
Family~$\CCC$ consists of rows~$(C, \rho)$ where
$C\subseteq \at(\prog) \cap \chi(t)$ is a \emph{counter-witness set}
in~$t$ to~$M$.
Similar to the sat-state~$\sigma$, the sat-state~$\rho$ for $C$ under
$M$ represents whether rules of the GL reduct~$\prog_t^M$ are
satisfied by a superset of $C$. We can see counter-witness sets
together with its sat-state as counter-witnesses.  Thus, $C$ witnesses
the existence of $C' \subsetneq M'$ satisfying
$C' \models (\progtneq{t} \cup \rho)^{M'}$ since $M$ witnesses a
model~$M'\supseteq M$ where $M' \models \progtneq{t}$.
In consequence, there exists an answer set of $\prog$ if the root
table contains $\langle \emptyset, \emptyfunc, \emptyset \rangle$.
We require local-reducts for deciding satisfiability of counter-witness sets.

\begin{definition}\label{def:bagreduct}%
  Let $\prog$ be a program, $\TTT=(\cdot,\chi)$ be a TD of $S(\prog)$,
  $t$ be a node of $\TTT$, ${R} \subseteq \prog_t$ %
  and $M\subseteq\at(\prog)$. We define
  \emph{local-reduct~${R}^{(t,M)}$} by
  ${[{R}^{(t)}]}^M$ and  %
  $\dot{R}^{(t,M)}: {R} \rightarrow 2^{{R}^{(t,M)}}$ by
  $\dot{R}^{(t,M)}(r) \eqdef \{r\}^{(t,M)}, r\in R$.
\end{definition}%

\begin{proposition}[c.f.~\cite{FichteEtAl17}]\label{thm:inc:runtime}
  Let $\prog$ be a program and $k\eqdef\tw{S(\prog)}$. Then, the
  algorithm~${\dpa}_{\INC}$ is correct and runs in time
  $\bigO{2^{2^{k + 2}}\cdot \CCard{S(\prog)}}$.%
\end{proposition}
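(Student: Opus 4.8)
The plan is to prove correctness and the running-time bound separately, in both cases adapting the corresponding arguments of Fichte \etal~\cite{FichteEtAl17} from the incidence/primal graphs to the semi-incidence graph. For correctness I would first fix the intended reading of a table row: for a node $t$, a triple $\langle M,\sigma,\CCC\rangle$ should lie in $\tab{t}$ exactly when $M\subseteq\at(\prog)\cap\chi(t)$ is the restriction to $\chi(t)$ of some set $M'$ of atoms-below-$t$ with $M'\models\progtneq{t}$, the sat-state $\sigma\subseteq\prog_t$ is exactly the set of bag-rules satisfied by $M'$, and $\CCC$ is exactly the family of pairs $\langle C,\rho\rangle$ for which there is a proper subset $C'\subsetneq M'$ with $C=C'\cap\chi(t)$ and $C'\models(\progtneq{t}\cup\rho)^{M'}$, where $\rho\subseteq\prog_t$ is the set of bag-rules recorded as satisfied for $C'$. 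Granting this invariant and instantiating it at the root $n$, where $\chi(n)=\emptyset$, $\att{n}=\at(\prog)$ and $\progtneq{n}=\prog$, one gets that $\langle\emptyset,\emptyfunc,\emptyset\rangle\in\tab{n}$ iff some $M'\models\prog$ admits no $C'\subsetneq M'$ with $C'\models\prog^{M'}$, i.e.\ iff $\prog$ has an answer set; a concrete answer set is then reconstructed by tracing witnesses from the root down to the leaves, so $\dpa_{\INC}$ is correct.

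The invariant itself I would establish by induction along $\post(T,n)$, i.e.\ bottom-up over the nice TD. The leaf case is immediate, since then $\chi(t)=\emptyset$ and $\progtneq{t}=\emptyset$. The inductive step splits into the five non-leaf cases of Algorithm~\ref{fig:incinc} --- introduce atom, introduce rule, remove atom, remove rule, join --- and for each I would verify soundness (every row the algorithm writes meets the invariant) and completeness (every row meeting the invariant at $t$ is produced from a compatible row, or pair of rows at a join, of the child table(s)). Two structural facts carry the argument. First, by the definition of a TD and of $S(\prog)$, every atom of a rule $r$ occurs together with $r$ in some bag, and any two distinct head atoms of a choice rule occur together in some bag; hence when $r$ is forgotten at a remove-rule node all atoms of $r$ have already been introduced, so the sat-states faithfully record whether $r$ (resp.\ its local reduct) is globally satisfied --- which is exactly what legitimizes discarding the rows with $r\notin\sigma$ (resp.\ $r\notin\rho$) --- and this is also why choice rules need not be carried in the sat-state at all. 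Second, the augmentation of the local-program by $\{\leftarrow B_r\mid r\text{ a choice rule with }H_r\subsetneq\att{t}\}$ from Definition~\ref{def:bagprogram} is precisely what makes the satisfiability test for counter-witnesses correct in the corner case that only part of a choice head lies below $t$.

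For the running time I would bound the size of every table and the work per node. A row $\langle M,\sigma,\CCC\rangle$ of $\tab{t}$ has $M\subseteq\at(\prog)\cap\chi(t)$, $\sigma\subseteq\prog_t$, and $\CCC$ a set of pairs $\langle C,\rho\rangle$ with $C\subseteq\at(\prog)\cap\chi(t)$ and $\rho\subseteq\prog_t$; since the atoms and the rules occurring in $\chi(t)$ partition the at most $k+1$ bag elements, there are at most $2^{k+1}$ pairs $(M,\sigma)$ and at most $2^{k+1}$ possible pairs $\langle C,\rho\rangle$, hence at most $2^{2^{k+1}}$ possible families $\CCC$, so $\Card{\tab{t}}\le 2^{k+1}\cdot 2^{2^{k+1}}\le 2^{2^{k+2}}$ using $k+1\le 2^{k+1}$. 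A nice TD of $S(\prog)$ is computable in linear time and has $\bigO{\CCard{S(\prog)}}$ nodes, each of bag size $\le k+1$. At each node $\INC$ does work polynomial in the sizes of the at most two child tables times a polynomial in $k$ for evaluating the $\SSR$ operator and the local-programs and -reducts; the costliest case is the join, whose pairing ranges only over child rows sharing a common witness set $M$, so a count keyed on $M$ keeps the join work within $2^{2^{k+1}}\cdot 2^{\bigO{k}}$, which is $\bigO{2^{2^{k+2}}}$ once the polynomial and $2^{\bigO{k}}$ lower-order factors are absorbed into the double exponential. Multiplying the per-node bound by the number of nodes gives $\bigO{2^{2^{k+2}}\cdot\CCard{S(\prog)}}$.

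The main obstacle is the completeness direction of the invariant at the join and introduce-atom nodes: there one has to rebuild a global model $M'$ together with each of its counter-witnesses $C'$ from the child data and check that the sat-states recombine correctly under the $\cup$ operator --- in particular that a bag-rule recorded satisfied in one join branch stays satisfied in the merged row, that the three families forming $\CCC$ at a join cover exactly the counter-witnesses of the merged model, and that adding an atom creates no spurious counter-witness. A secondary, more clerical point is pinning the double exponent down to exactly $k+2$: this needs the sharp row count above rather than a crude estimate, plus the observation that the quadratic blow-up at join nodes is over equal witnesses, so that it and the polynomial-in-$k$ per-row work still fit under the stated bound once low-order factors are absorbed.
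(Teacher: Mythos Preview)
Your proposal is correct and is exactly what the paper invokes. The paper does not spell out its own proof of this proposition but simply refers to~\cite{FichteEtAl17} via the ``c.f.'' annotation; your invariant for the rows of~$\tab{t}$, the bottom-up induction over the node types of the nice TD, and the row-counting argument yielding the $2^{2^{k+2}}$ bound are precisely the adaptation of those techniques to the semi-incidence graph and to~$\INC$.
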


\section{DynASP2.5: Implementing a III Pass DP Algorithm}\label{sec:impl25}

The classical DP algorithm~${\dpa}_{\INC}$ (Step~3 of
Figure~\ref{fig:dp-approach}) follows a single pass approach. It
computes both witnesses and counter-witnesses by traversing the given
TD exactly once. In particular, it stores exhaustively all potential
counter-witnesses, even those counter-witnesses where the witnesses in
the table of a node cannot be extended in the parent node. In
addition, there can be a high number of duplicates among the
counter-witnesses, which are stored repeatedly. %
In this section, we propose a multi-pass approach (\MDP) for DP on TDs
and a new implementation (DynASP2.5), which fruitfully adapts and
extends ideas from a different domain~\cite{BliemEtAl16b}.
Our novel algorithm allows for an early cleanup (purging) of witnesses that do
not lead to answer sets, which in consequence (i)~avoids to construct
expendable counter-witnesses. Moreover, multiple passes enable us to
store witnesses and counter-witnesses separately, which in turn
(ii)~avoids storing counter-witnesses duplicately and (iii)~allows for
highly space efficient data structures (pointers) in practice when
linking witnesses and counter-witnesses together.
Figure~\ref{fig:dp-advanced-approach} (right, middle) presents the control flow
of the new multi-pass approach~\emph{{\dynaspplus{\cdot}}}, 
where \MDP introduces a much more elaborate computation in Step~3.

\subsection{The Algorithm}



\begin{algorithm}[t]
  \KwData{
    Nice TD~$\TTT=(T,\chi)$ with
    $T=(N,\cdot,n)$ of a graph~$S(\prog)$, and
    mappings $\ATab{\AlgW}[\cdot]$, 
    $\ATab{\AlgC}[\cdot]$. 
  }%
  \KwResult{$\ATab{\AlgW, \AlgC}$: maps node~$t\in
    T$ to some pair~$( \tab{t}^\AlgW, \tab{t}^\AlgC
    )$ with $\tab{t}^\AlgW \in \ATabs{\AlgW}{t}, \tab{t}^\AlgC \in \ATabs{\AlgC}{t}$. } %
  
  $\Tab{}_t \eqdef \SB \ATab{\AlgW, \AlgC}[t'] \SM t' \text{ is a child of $t$ in
      $T$}\SE$\;\vspace{-0.05em} %
  
  \tcc*[h]{Get for a node~$t$ tables of (preceeding) combined child rows (CCR)\vspace{-0.05em}}\;
    \vspace{-0.1em}$\label{algo3:ccr}\text{CCR}_t \eqdef \hat{\Pi}_{\tab{}' \in \Tab{}_t}\tab{}'$
\tcc*[f]{\hspace{-0.05em}For Abbreviations see Footnote~\ref{foot:crossproduct}.\hspace{-0.05em}}
	  
    \tcc*[h]{Get for a row~$\vec \tabval$ its combined child rows (origins)}\; %
    $\label{algo3:orig}\orig_t(\vec \tabval) \eqdef %
    \SB S %
    \SM S \in \text{CCR}_t, \vec u \in \tab{}, \tab{} =
    {\AlgW}(t,\chi(t),\prog_t,\atto,f_w(S)) %
    \SE$
      
    \tcc*[h]{Get for a table~$S$ of combined child rows its successors (evolution)
	 }\; $\label{algo3:evol}\evol_t(S) \eqdef %
    \SB \vec \tabval \SM %
    \vec \tabval \in \tab{}, %
    \tab{} = {\AlgC}(t,\chi(t),\prog_t,\atto, \tab{}'), \tab{}'\in
    S%
    \SE$\vspace{0.3em}
	
  \For{\label{algo3:for}\text{\normalfont iterate}
    $t$ in \text{\normalfont post-order}(T,n)}{\vspace{-0.05em}%
    %
	%
    \hspace{-0.5em}\tcc*[h]{Compute counter-witnesses ($\prec$-smaller rows) for a
      witness set~$M$ }\; $\label{algo3:subs}\hspace{-0.5em}\subsets_\prec(f, M, S) \eqdef %
    \SB %
    \vec \tabval \mid \vec \tabval \in \ATabs{\AlgC}{t},
	\vec \tabval \in
    \evol_t(f(S)), \vec \tabval = \langle C, \cdots \rangle, C\prec M \}$%
      %
    %
	
    \hspace{-0.5em}\tcc*[h]{Link each witness~$\vec \tabval$ to its counter-witnesses and store the results}\;
    $\label{algo3:link}\hspace{-0.5em}\ATabs{{\AlgW, \AlgC}}{t} \leftarrow %
    \SB ( \vec \tabval, %
    \subsets_{\subsetneq}(f_{w}, M, S) \cup \subsets_{\subseteq}(f_{cw}, M, S) ) \SM \vec \tabval\hspace{-0.1em} \in\hspace{-0.1em}
    \ATabs{\AlgW}{t}, 
    \vec \tabval\hspace{-0.05em} =\hspace{-0.05em} \langle M, \cdots \rangle, S\hspace{-0.05em} \in\hspace{-0.05em} \orig_t(\tabval)
    \SE$\; %
    \vspace{-0.5em} }\vspace{-0.15em}%
  \caption{Algorithm
    $\cw_{{\AlgW}, {\AlgC}}({\cal T}, \ATab{\AlgW}, \ATab{\AlgC})$ for
    linking counter-witnesses to witnesses.}
\label{fig:dpontd3}
\end{algorithm}


%
 %
%
\noindent Our algorithm (\MDP) executed as Step~3 runs $\dpa_{\INCSAT}$,
$\dpa_{\INCCOUNTERSAT}$ and $\cw_{\INCSAT, \INCCOUNTERSAT}$ in three
passes (3.I, 3.II, and 3.III) as follows:

\begin{enumerate}
\item[3.I.]\label{pass1} %
  First, we run the algorithm~$\dpa_{{\INCSAT}}$, which computes
  in a bottom-up traversal for every node~$t$ in the tree
  decomposition a table $\ATabs{\INCSAT}{t}$ of witnesses for~$t$.  Then, in a
  top-down traversal for every node~$t$ in the TD
  remove from tables $\ATabs{\INCSAT}{t}$ witnesses, which do not extend to a
  witness in the table for the parent node (``Purge non-witnesses'');
  these witnesses can never be used to construct a model (nor answer set) of the program.
\item[3.II.]\label{pass2} For this step, let ${\INCCOUNTERSAT}$ be a
  table algorithm computing only counter-witnesses of $\INC$ (blue parts of
  Algorithm~\ref{fig:incinc}).
  We execute $\dpa_{{\INCCOUNTERSAT}}$, compute for all witnesses
  counter-witnesses at once and store the resulting tables
  in~$\ATabs{\INCCOUNTERSAT}{\cdot}$. For every node~$t$, table~$\ATabs{\INCCOUNTERSAT}{t}$ contains
  counter-witnesses to witness being $\subset$-minimal. Again,
  irrelevant rows are removed (``Purge
  non-counter-witnesses'').

\item[3.III.]\label{pass3} Finally, in a bottom-up traversal for every
  node~$t$ in the TD, witnesses and counter-witnesses
  are linked using algorithm~$\cw_{{\INCSAT}, {\INCCOUNTERSAT}}$ (see
  Algorithm~\ref{fig:dpontd3}).  $\cw_{{\INCSAT}, {\INCCOUNTERSAT}}$ 
  takes previous results and maps rows in $\ATabs{\INCSAT}{t}$ to a table (set)
  of rows in $\ATabs{\INCCOUNTERSAT}{t}$.
  
\end{enumerate}%
\noindent We already explained %
the table algorithms~$\dpa_{\INCSAT}$ and $\dpa_{\INCCOUNTERSAT}$ in
the previous section. The main part of our multi-pass algorithm is the
algorithm~$\cw_{{\INCSAT}, {\INCCOUNTERSAT}}$ based on the general
algorithm~$\cw_{{\AlgW}, {\AlgC}}$ (Algorithm~\ref{fig:dpontd3}) with
$\AlgW=\INCSAT$, $\AlgC=\INCCOUNTERSAT$, which links those separate
tables together.  Before we quickly discuss the core
of~$\cw_{{\AlgW}, {\AlgC}}$ in
Lines~\ref{algo3:for}--\ref{algo3:link}, note that
Lines~\ref{algo3:ccr}--\ref{algo3:evol} introduce auxiliary
definitions.  Line~\ref{algo3:ccr} combines rows of the child nodes of
given node~$t$, which is achieved by a product over sets%
\footnote{\label{foot:crossproduct}%
  For set~$I=\{1,\ldots, n\}$ and sets~$S_i$, we define
  ${\prod _{i\in I}S_{i}\eqdef S_{1}\times \dotsm \times
    S_{n}=\{(s_{1},\dotsc ,s_{n}):s_{i}\in S_{i}\}}.$ Moreover, for
  ~${\prod_{i\in I}S_{i}}$, let
  ${\hat{\prod}_{i\in I}S_{i}} \eqdef \SB \Big\{\{s_1\},\dotsc,
  \{s_n\}\Big\} \SM (s_1, \dotsc, s_n) \in \prod_{i\in I}S_{i}
  \SE$. If for each~$S \in {\hat{\prod}_{i\in I}S_{i}}$ and
  $\{s_i\} \in S$, $s_i$ is a pair with a witness and a
  counter-witness part, %
  let $f_w(S)\eqdef \bigcup_{\{( W_i, C_i )\}\in S} \{\{W_i\}\}$
  and $f_{cw}(S)\eqdef \bigcup_{\{( W_i, C_i)\}\in S} \{\{C_i\}\}$
  restrict~$S$ to the resp.\ (counter-)witness parts.
}, %
where we drop the order and keep sets only. Line~\ref{algo3:orig}
concerns determining for a row~$\vec
\tabval$ its \emph{origins} (finding preceding combined rows that lead
to~$\vec \tabval$ using table
algorithm~$\AlgW$). Line~\ref{algo3:evol} covers deriving succeeding
rows for a certain child row combination its \emph{evolution} rows via
algorithm~$\AlgC$. In an implementation, origin as well as evolution
are not computed, but represented via pointer data structures directly
linking to $\ATabs{\AlgW}{\cdot}$ or
$\ATabs{\AlgC}{\cdot}$, respectively.  Then, the table
algorithm~$\cw_{{\AlgW},
  {\AlgC}}$ applies a post-order traversal and links witnesses to
counter-witnesses in Line~\ref{algo3:link}.  %
$\cw_{{\AlgW}, {\AlgC}}$ searches for
origins~($\orig$) of a certain witness~$\vec
\tabval$, uses the
counter-witnesses~($f_{cw}$) linked to these origins, and then
determines the
evolution~($\evol$) in order to derive counter-witnesses~(using
$\subsets$) of~$\vec \tabval$.%
\begin{figure}[t]%
  \centering %
\includegraphics[scale=1.158]{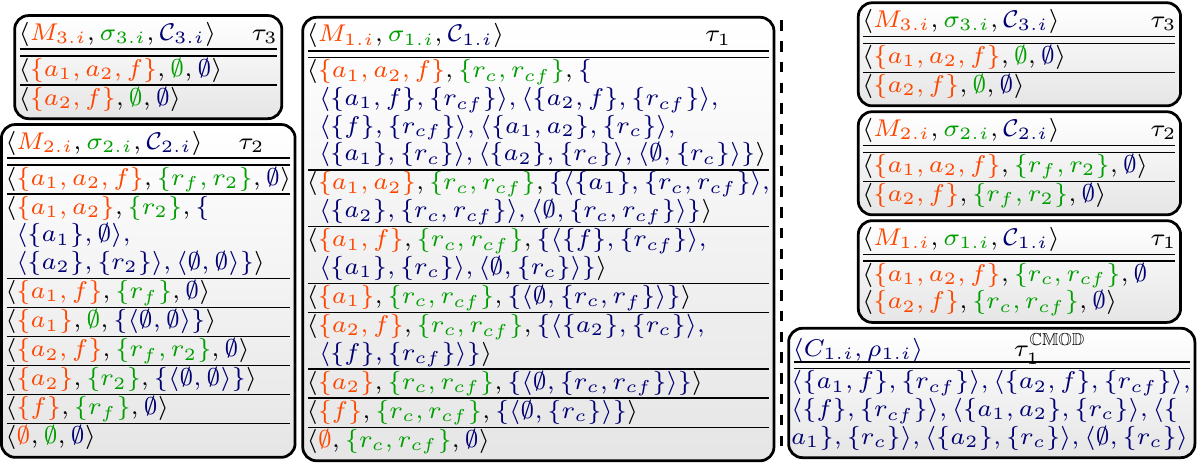}%
\caption{Selected DP tables after~$\dpa_{\INC}$~(left) and after~\MDP~(right) for
  TD~$\TTT$.}
\label{fig:exmdp}
\end{figure}%

\begin{theorem}
\label{thm:mdp}
For a program~$\prog$ of semi-incidence treewidth $k:=\tw{S(\prog)}$,
the algorithm~\MDP is correct and runs in
time~$\bigO{2^{2^{k+2}}\cdot \CCard{\prog}}$.
\end{theorem}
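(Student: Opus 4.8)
The plan is to establish correctness and the running-time bound separately, reducing each to facts already available from Proposition~\ref{thm:inc:runtime} and the structure of the three passes.

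For \emph{correctness}, I would first invoke the correctness of $\dpa_{\INC}$ (Proposition~\ref{thm:inc:runtime}): at the root~$n$ the table~$\tab{n}$ computed by the single-pass algorithm contains $\langle\emptyset,\emptyfunc,\emptyset\rangle$ iff $\prog$ has an answer set. The task is then to show that \MDP\ computes, at the root, enough information to decide the same thing. The key lemma I would prove is an invariant stating that Pass~3.I computes exactly the \emph{witness}-components (first two coordinates) that appear in the rows of $\dpa_{\INC}$, and that the top-down purge removes precisely those witnesses with no extension to the parent -- hence preserves every witness that appears in some root-reaching derivation. Next, I would show that Pass~3.II, run independently, computes for each witness the family of its counter-witnesses exactly as the blue part of Algorithm~\ref{fig:incinc} would (since $\INCCOUNTERSAT$ is literally that part, and the purge again removes only rows with no surviving parent extension). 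Finally, the crucial step is to argue that $\cw_{\INCSAT,\INCCOUNTERSAT}$ correctly re-assembles the pairing between a witness $\langle M,\sigma\rangle$ and its counter-witness family $\CCC$: concretely, I would show by induction over the post-order traversal that for each surviving witness row~$\vec\tabval$ at node~$t$, the set $\subsets_{\subsetneq}(f_w,M,S)\cup\subsets_{\subseteq}(f_{cw},M,S)$ assembled from its origins and their evolutions equals the family $\CCC$ that $\dpa_{\INC}$ would have stored in the combined row. This uses the fact that $\orig_t$ recovers exactly the child-row combinations that produce $\vec\tabval$ under $\AlgW$, and $\evol_t$ recovers what $\AlgC$ derives from them, so composing them reproduces the simultaneous computation. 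Concluding, at the root the pair $(\langle\emptyset,\emptyfunc\rangle,\emptyset)$ is in $\ATabs{\INCSAT,\INCCOUNTERSAT}{n}$ iff $\langle\emptyset,\emptyfunc,\emptyset\rangle\in\tab{n}$, which by Proposition~\ref{thm:inc:runtime} characterizes existence of an answer set.

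For the \emph{running time}, I would observe that each of the three passes is a single post-order (plus, for 3.I and 3.II, one top-down) traversal over the $\bigO{\CCard{\prog}}$ nodes of a nice TD. In Pass~3.I the tables $\ATabs{\INCSAT}{t}$ have size $\bigO{2^{k+1}}$ (one witness per truth assignment to the $\le k+1$ atoms in the bag, together with a sat-state over $\le k+1$ rules -- but the sat-state is functionally determined once the witness is fixed, as the running example illustrates, so the relevant bound is still single-exponential); the purge is linear in these tables times the parent table, hence $\bigO{2^{O(k)}\cdot\CCard{\prog}}$ overall. In Pass~3.II, $\ATabs{\INCCOUNTERSAT}{t}$ inherits the double-exponential bound already charged in Proposition~\ref{thm:inc:runtime}, namely $\bigO{2^{2^{k+2}}}$ per node, since a counter-witness family is a set of subsets of a bag-witness; the purge is again polynomial in that size. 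In Pass~3.III, the work at node~$t$ in Line~\ref{algo3:link} ranges over witness rows ($\bigO{2^{k+1}}$ of them) and, for each, over its origins (combinations of at most two child rows, so $\bigO{2^{2(k+1)}}$ combined rows) and their evolutions and $\subsets$-families, each bounded by the same $\bigO{2^{2^{k+2}}}$ counter-witness bound; multiplying the single-exponential factors into the double-exponential one leaves the dominant term $\bigO{2^{2^{k+2}}}$. Summing over all nodes yields $\bigO{2^{2^{k+2}}\cdot\CCard{\prog}}$, matching the single-pass bound.

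The main obstacle I expect is the correctness argument for $\cw_{\INCSAT,\INCCOUNTERSAT}$ -- specifically, verifying that decomposing the simultaneous single-pass computation into ``compute all witnesses, prune, then compute all counter-witnesses, prune, then re-link'' does not lose or spuriously add any witness/counter-witness pair. The delicate point is the interaction between the top-down purges and the $\orig$/$\evol$ bookkeeping: one must check that a witness row survives in $\ATabs{\INCSAT}{t}$ exactly when it has an origin whose own witnesses survived, and dually for counter-witnesses, so that the composition $\evol_t\circ f_{\cdot}\circ\orig_t$ in Line~\ref{algo3:link} faithfully reconstructs the join/introduce/remove propagation of $\CCC$ in Algorithm~\ref{fig:incinc}. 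I would isolate this as a separate lemma proved by structural induction on the TD node type ($\leaf$, $\intr$ for atoms and rules, $\rem$ for atoms and rules, $\join$), checking each case against the corresponding clause of Algorithm~\ref{fig:incinc}; the $\join$ case, where counter-witnesses from the two children are combined coordinate-wise on a shared witness set~$M$, is the one requiring the most care.
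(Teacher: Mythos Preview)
Your approach is genuinely different from the paper's. The paper does \emph{not} argue by direct comparison with the single-pass $\dpa_{\INC}$. Instead it (i) reduces \AspEnum for disjunctive programs to \MSatEnum via an explicit formula construction (Lemma~\ref{thm:asptomm}), paying a treewidth blow-up $k'\le 7k+2$; (ii) invokes the \emph{augmentable tables} machinery of Bliem~\etal~\cite{BliemEtAl16b}, showing that $\ATab{\INCSAT}$ is augmentable and that the abstract augmenting function $\aug(\cdot)$ computes exactly the subset-minimal solutions (Proposition~\ref{thm:aug-solutions-minimal}); and (iii) identifies $\cw_{\AlgW,\AlgW}$ with $\aug$. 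This yields the weaker bound $2^{2^{7k+4}}\cdot\CCard{\prog}$ from the reduction; the claimed $2^{2^{k+2}}$ bound is only asserted to follow from ``a more involved (direct) proof''. Your direct route, if it goes through, would deliver the tight bound without the detour---that is its main advantage. Conversely, the paper's route buys modularity: correctness of the linking step is inherited from a general theorem about augmentable tables rather than re-derived node-type by node-type.

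There is, however, a real gap in your plan. You write that Pass~3.II ``computes for each witness the family of its counter-witnesses exactly as the blue part of Algorithm~\ref{fig:incinc} would (since $\INCCOUNTERSAT$ is literally that part\ldots)''. But the blue part is \emph{not} witness-independent: every occurrence of $\SSR(\dot\prog_t^{(t,M)},\cdot)$ or $\SSR(\dot\prog_t^{(t,\MAI{M})},\cdot)$ uses the local reduct with respect to the current witness~$M$. Running it as a stand-alone table algorithm therefore requires either carrying $M$ along (which re-couples the passes and defeats the purpose) or reformulating counter-witness rows so that the dependence on $M$ is encoded locally. The paper handles this via the three-valued representation mentioned in the implementation section (atoms are T, F, or ``false here but true in the witness''), which is what makes the $M$-dependence of the reduct reconstructible at link time. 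Your inductive invariant for Pass~3.II must be stated over this richer row format, and the $\subsets_\prec$ test in Line~\ref{algo3:subs} must be shown to recover exactly the $C\subsetneq M$ relation once the third truth value is projected out. Without addressing this, the claim that $\evol_t\circ f_{cw}\circ\orig_t$ reproduces the $\CCC$-propagation of Algorithm~\ref{fig:incinc} does not type-check. A smaller point: the sat-state $\sigma$ is \emph{not} functionally determined by the witness set $M$ (it records satisfaction due to already-removed atoms), so your Pass~3.I table bound should be $2^{2(k+1)}$, not $2^{k+1}$; this does not affect the final asymptotics but the justification you gave is incorrect.
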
%
\begin{proof}[Proof (Sketch)] %
  Due to space constraints, we only sketch the proof idea for
  enumerating answer sets of disjunctive ASP programs by means of
  \MDP. Let $\prog$ be a disjunctive program and
  $k\eqdef \tw{S(\prog)}$. We establish a reduction~$R(\prog,k)$ of
  \AspEnum to \MSatEnum, such that there is a one-to-one
  correspondence between answer sets and models of the formula, more
  precisely, for every answer set~$M$ of $\prog$ and for the resulting
  instance~$(F,k')=R(\prog, k)$ the set~$M \cup \{sol\}$ is a
  subset-minimal model of~$F$ and $k'=tw(I(F))$ with
  $k' \leq 7k + 2$. We compute in time~$2^{\bigO{k'^3}} \cdot \CCard{F}$ a TD of
  width at most~$k'$~\cite{BodlaenderKoster08} and add $sol$ to every bag.
  Using a table algorithm designed for SAT~\cite{SamerSzeider10b} %
  we compute witnesses and counter-witnesses. %
  Conceptually, one could also modify ${\INCSAT}$ for this
  task. %
  To finally show correctness of linking counter-witnesses to
  witnesses as presented in~$\cw_{\INCSAT, \INCSAT}$, we have to
  extend earlier work~\cite[Theorem~3.25 and 3.26]{BliemEtAl16b}.
  Therefore, we enumerate subset-minimal models of~$F$ by following
  each witness set containing $sol$ at the root having
  counter-witnesses~$\emptyset$ back to the leaves.  This runs in
  time~$\bigO{2^{2^{(7k+2)+2}}\cdot
    \CCard{\prog}}$,~c.f.,~\cite{BliemEtAl16b,FichteEtAl17}.  A more
  involved (direct) proof, allows to decrease the runtime
  to~$\bigO{2^{2^{k+2}}\cdot \CCard{\prog}}$ (even for choice %
  rules).  \hfill
\end{proof}

\begin{example}
  Let $k$ be some integer and $P_k$ be some program that contains the
  following rules $r_c\eqdef \{a_1, \cdots, a_k\} \hsep f$,
  $r_2\eqdef \hsep \neg a_2$, $\ldots$, $r_k \eqdef \hsep \neg a_{k}$,
  and $r_f\eqdef \hsep \neg f$ and~$r_{cf} \eqdef \{f\} \hsep$.  
  The rules~$r_1$, $\ldots$, $r_k$ simulate that only certain subsets
  of~$\{a_1, \cdots, a_k\}$ are allowed.  Rules~$r_f$ and~$r_{cf}$
  enforce that~$f$ is set to true.  Let ${\cal T} = (T,\chi,t_3)$ be a
  TD of the semi-incidence graph~$S({P}_k)$ of program~${P}_k$ where
  $T=(V,E)$ with $V=\{t_1,t_2,t_3\}$, $E=\{(t_1,t_2),(t_2,t_3)\}$,
  $\chi(t_1) = \{a_1, \cdots, a_k, f, r_c, r_{cf}\}$,
  $\chi(t_2) = \{a_1, \cdots, a_k, r_2, \cdots, r_k, r_f\}$, and
  $\chi(t_3) = \emptyset$.  Figure~\ref{fig:exmdp} (left) illustrates
  the tables for program~$P_2$ after $\dpa_{\INC}$, whereas
  Figure~\ref{fig:exmdp} (right) presents tables using $\MDP$, which
  are exponentially smaller in~$k$, mainly due to cleanup.  Observe
  that Pass 3.II~$\MDP$, ``temporarily'' materializes
  counter-witnesses \emph{only} for~$\tab{1}$, presented in
  table~$\tab{1}^\INCCOUNTERSAT$.  Hence, using multi-pass
  algorithm~$\MDP$ results in an exponential speedup.  Note that we
  can trivially extend the program such that we have the same effect
  for a TD of minimum width and even if we take the incidence graph.
  In practice, programs containing the rules above frequently
  occur when encoding by means of saturation~\cite{EiterGottlob95}.
  The program~$P_k$ and the TD~$\TTT$ also reveal that a different TD
  of the same width,
  where $f$ occurs already very early in the bottom-up traversal,
  would result in a smaller table~$\tab{1}$ even when
  running~$\dpa_{\INC}$. 
\end{example}

\subsection{Implementation Details}
Efficient implementations of dynamic programming algorithms on TDs are
not a by-product of computational complexity theory and involve tuning
and sophisticated algorithm engineering. %
Therefore, we present additional implementation details of
algorithm~$\MDP$ into our prototypical multi-pass solver
{\dynaspplus{\cdot}}, including two variations (depgraph, joinsize TDs).

Even though normalizing a TD can be achieved without increasing its
width, a normalization may artificially introduce additional
atoms. Resulting in several additional intermediate join nodes among
such artificially introduced atoms requiring a significant amount of
total unnecessary computation in practice.
On that account, we use \emph{non-normalized} tree decompositions.
In order to still obtain a fixed-parameter linear algorithm, we limit
the number of children per node to a constant.
Moreover, \emph{linking} counter-witnesses to witnesses efficiently is
crucial. The main challenge is to deal with situations where a row
(witness) might be linked to different set of counter-witnesses
depending on different predecessors of the row (hidden in set notation
of the last line in Algorithm~\ref{fig:dpontd3}). In these cases,
{\dynaspplus{\cdot}} eagerly creates a ``clone'' in form of a very
light-weighted proxy to the original row and ensures that only the
original row (if at all required) serves as counter-witness during
pass three. Together with efficient caches of counter-witnesses,
{\dynaspplus{\cdot}} reduces overhead due to clones in practice.
Dedicated \emph{data structures} are vital. Sets of Witnesses and
satisfied rules are represented in the {\dynaspplus{\cdot}} system via
constant-size bit vectors. $32$-bit integers are used to represent by
value~$1$ whether an atom is set to true or a rule is satisfied in the
respective bit positions according to the bag.
A restriction to $32$-bit integers seems reasonable as we assume
for now (practical memory limitations) that our
approach works well on TDs of width~$\leq 20$.
Since state-of-the-art computers handle such constant-sized integers
extremely efficient, {\dynaspplus{\cdot}} allows for efficient
projections and joins of rows, and subset checks in general.
In order to \emph{not recompute} counter-witnesses (in Pass~3.II) for
different witnesses, we use a three-valued notation of counter-witness
sets consisting of atoms set to true (T) or false (F) or false but
true in the witness set (TW) used to build the reduct.  Note that the
algorithm enforces that only (TW)-atoms are relevant, i.e., an atom
has to occur in a default negation or choice rule.

Minimum width is not the only optimization goal when computing TDs by
means of heuristics.  Instead, using TDs where a certain feature value
has been maximized in addition (\emph{customized TDs}) works seemingly
well in practice~\shortversion{\cite{AbseherEtAl17a}}\longversion{\cite{AbseherEtAl17a,MorakEtAl12}}.
While DynASP2.5 (\MDP) does not take additional TD features into
account, we also implemented a variant (\emph{\dynaspplus{\cdot}
  depgraph}), which prefers one out of ten TDs that intuitively
speaking avoids to introduce head atoms of some rule~$r$ in node~$t$,
without having encountered every body atom of~$r$ below~$t$, similar
to atom dependencies in the program~\cite{GottlobScarcelloSideri02}.
The variant \emph{\dynaspplus{\cdot} joinsize} minimizes bag sizes of
child nodes of join nodes,~c.f.~\cite{AbseherMusliuWoltran17a}.

\label{sec:evaluation}

\subsection{Experimental Evaluation}
\label{sec:experiments}
\noindent We performed experiments to investigate the runtime behavior of
{\dynaspplus{\cdot}} and its variants, in order to evaluate whether
our multi-pass approach can be beneficial and has practical advantages
over the classical single pass approach ({\dynasp{\cdot}}).
Further, we considered the dedicated ASP solver Clasp~3.3.0%
\longversion{\footnote{Clasp is available
  at~\url{https://github.com/potassco/clasp/releases/tag/v3.3.0}.}}.
Clearly, we \emph{cannot} hope to solve programs with graph
representations of high treewidth. However, programs involving
real-world graphs such as graph problems on transit graphs admit TDs
of acceptable width to perform DP on TDs.  
To get a first intuition, we focused on the Steiner tree problem
(\pname{St}) for our benchmarks.
Note that we support the most frequently used SModels input
format\longversion{~\cite{lparse}} for our implementation.

\newcommand{\blahtab}[1]{{\tiny{#1}}}
\begin{figure*}[t]
\includegraphics[scale=0.985]{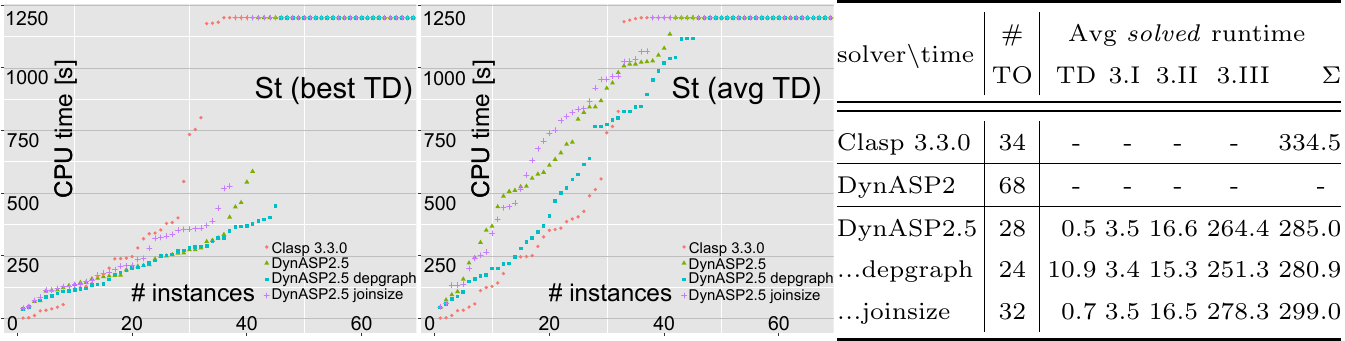}
\trash{
\begin{minipage}{0.3075\textwidth}
	\includegraphics[scale=0.27]{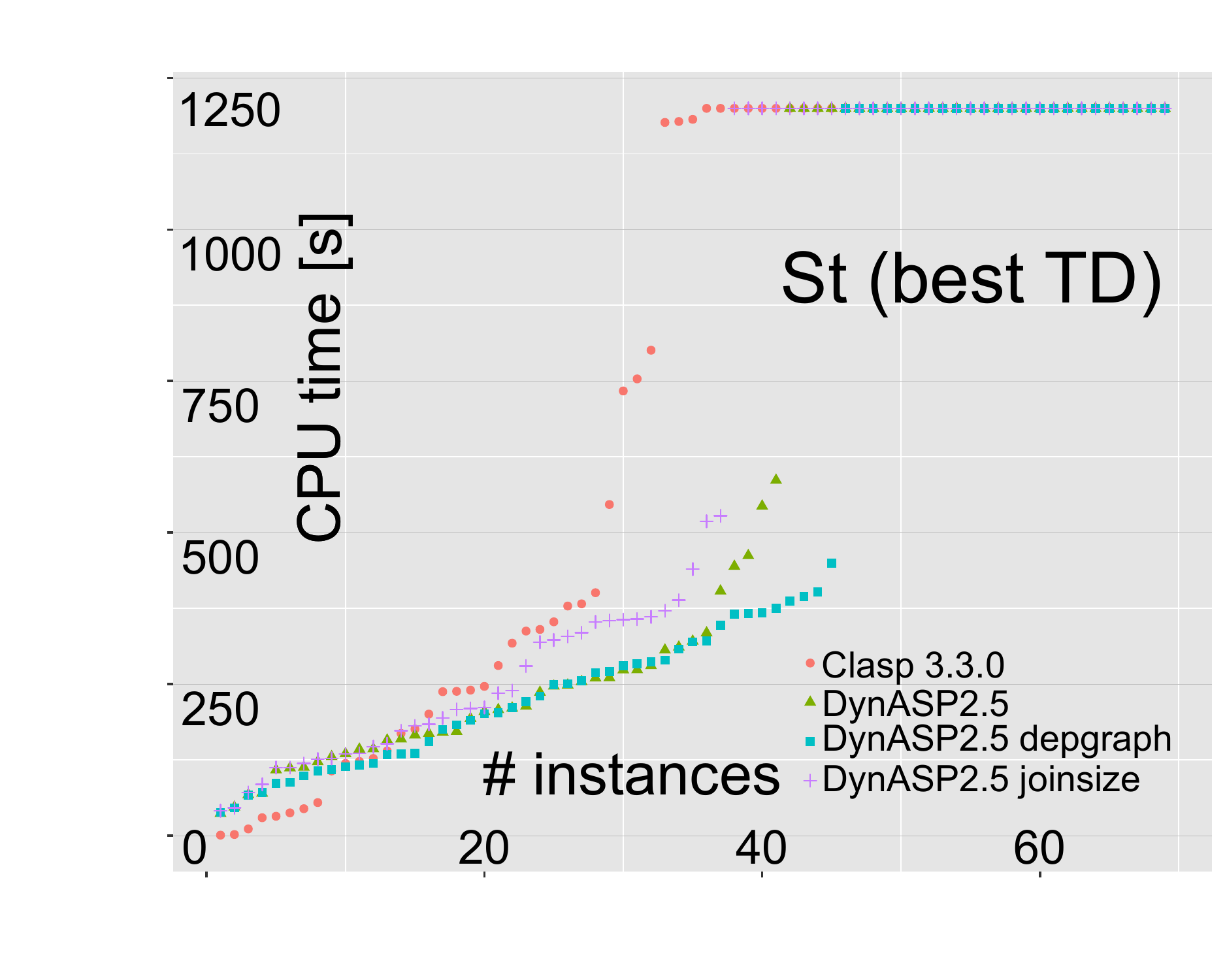}%
\end{minipage}
\begin{minipage}{0.31\textwidth}
	\includegraphics[scale=0.27]{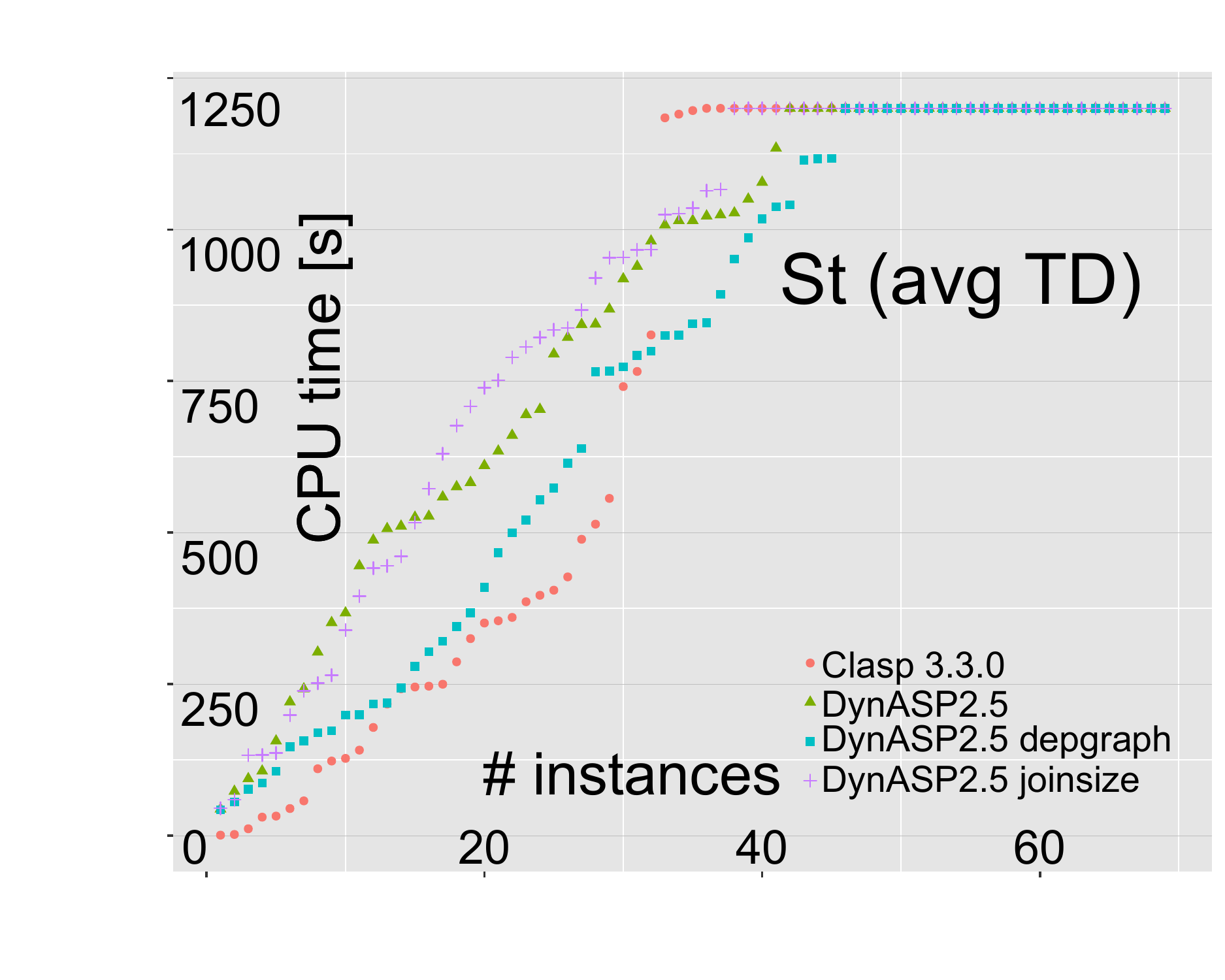}%
\end{minipage}
\begin{minipage}{0.37\textwidth}%
\begin{tabular}{l|c|r@{}
	r@{}%
	r@{}%
	r@{}r@{}}
    \toprule %
     \multirow{2}{*}{\blahtab{solver$\backslash$time}} & {\blahtab{$\#$}} &
				\multicolumn{5}{c}{\blahtab{Avg runtime  (\emph{solved} insts.)}}\\ & 
					{\blahtab{TO}} &
					\multicolumn{1}{c}{\blahtab{TD}} & 
					\multicolumn{1}{c}{\blahtab{3.I}} & 
					\multicolumn{1}{c}{\blahtab{3.II}} & 
					\multicolumn{1}{c}{\blahtab{3.III}} & 
					\multicolumn{1}{c}{\blahtab{$\Sigma$}}\\ 
    \midrule\midrule
  {\blahtab{Clasp~3.3.0}} & \blahtab{34} & \blahtab{-} & \blahtab{-} & \blahtab{-} & \blahtab{-} & \blahtab{334.5}\\\hline
  \blahtab{\dynaspplus{}} & \blahtab{28} & \blahtab{0.5} & \blahtab{3.5} & \blahtab{16.6} & \blahtab{264.4} & \blahtab{285.0}\\
  \\
  \blahtab{\dynaspplus{}} & \blahtab{28} & \blahtab{0.5} & \blahtab{3.5} & \blahtab{16.6} & \blahtab{264.4} & \blahtab{285.0}\\
  \blahtab{``depgraph''} & \blahtab{24} & \blahtab{10.9} & \blahtab{3.4} & \blahtab{15.3} & \blahtab{251.3} & \blahtab{280.9}\\
  \blahtab{``joinsize''} & \blahtab{32} & \blahtab{0.7} & \blahtab{3.5} & \blahtab{16.5} & \blahtab{278.3} & \blahtab{299.0}\\

    \bottomrule
  \end{tabular}%
\end{minipage}%
}
\caption{Cactus plots showing best and average runtime among five TDs
  (left). Number of Timeouts (TO) and average runtime among solved
  instances (right).}
\label{fig:random}
\end{figure*}%

We mainly inspected the CPU time using the average over five runs per
instance (five fixed seeds allow certain variance for heuristic TD
computation).
For each run, we limited the environment to 16 GB RAM and 1200 seconds
CPU time.
We used Clasp with options ``\text{-{-}}stats=2
\text{-{-}}opt-strategy=usc,pmres,disjoint,stratify \text{-{-}}opt-usc-shrink=min -q'', which
enable very recent improvements for unsatisfiable
cores~\cite{AlvianoDodaro16a}, and disabled
solution printing/recording. We also benchmarked Clasp with branch-and-bound,
which was, however, outperformed by the unsat.\
core options on all our instances.
Note that without the very recent unsatisfiable core advances Clasp
timed out on almost every instance.
We refer to an extended version~\cite{FichteEtAl17b} for more
details on the benchmark instances, encodings, and benchmark
environment. %
The left plot in Figure~\ref{fig:random} shows the result of always
selecting the best among five TDs, %
whereas the right plot concerns average runtime.  The table in
Figure~\ref{fig:random} reports on average running times (TD
computation and Passes~3.I, 3.II, 3.III) among the \emph{solved}
instances and the total number of timeouts (TO).  We consider an
instance to time out, when all five TDs exceeded the limit.
For the variants depgraph and joinsize, runtimes for computing and
selecting among ten TDs are included.
Our empirical benchmark results confirm that {\dynaspplus{\cdot}}
exhibits competitive runtime behavior even for TDs of treewidth
around~$14$.  Compared to state-of-the-art ASP solver Clasp,
{\dynaspplus{\cdot}} is capable of additionally delivering 
the number of optimal solutions.
In particular, variant ``depgraph'' shows promising runtimes.

\section{Conclusion}
\label{sec:conclusions}
In this paper, we presented a novel approach for ASP solving based on
ideas from parameterized complexity.  Our algorithms runs in linear
time assuming bounded treewidth of the input program. Our solver applies
DP in three passes, thereby avoiding redundancies. %
Experimental results indicate that our ASP solver is competitive for
certain classes of instances with small treewidth, where the latest
version of the well-known solver~Clasp hardly keeps up.  An
interesting question for future research is whether a linear
amount of passes (incremental DP) can improve the runtime
behavior. 
\bibliography{references}
\longversion{ 
\clearpage
\appendix
\section{Additional Example}%

\noindent We assume again row numbers per table~$\tab{t}$, i.e.,
$\tabval_{t.i} = \langle M_{t.i}, \sigma_{t.i}, {\cal C}_{t.i}
\rangle$ is the $i^{th}$-row. 
Further, for each
counter-witness~
$\langle {C_{t.i.j}, \rho_{t.i.j} \rangle \in {\cal C}_{t.i}}$, $j$
marks its ``order'' (as depicted in Figure~\ref{fig:running1_inc}
(right)) in set~${\cal C}_{t.i}$.

\begin{example}\label{ex:running_counter}
  Again, we consider~$\prog$ of Example~\ref{ex:running} and
  ${\cal T}=(\cdot, \chi)$ of Figure~\ref{fig:running1_inc} as well as
  tables~$\tab{1}$, $\ldots$, $\tab{34}$ of
  Figure~\ref{fig:running1_inc}~(right) using $\dpa_\INC$.  We only
  discuss certain tables.
  Table~$\tab{1}=\{\langle \emptyset, \emptyfunc, \emptyset \rangle\}$
  as $\type(t_1)=\leaf$. Node~$t_2$ introduces atom~$e_{ab}$,
  resulting in table
  $\{\langle \{e_{ab}\}, \emptyset, \{(\emptyset, \emptyfunc)\}
  \rangle, \langle \emptyset, \emptyset, \emptyset \rangle \}$
  (compare to Algorithm~\ref{fig:incinc} L3--5).  Then, node~$t_3$
  introduces rule~$r_{ab}$, which is removed in node~$t_4$.  Note
  that~$C_{3.1.1} = \langle \emptyset, \emptyset \rangle \in {\cal
    C}_{3.1.1}$ does not have a ``successor row'' in table~$\tab{4}$
  since $r_{ab}$ is not satisfied (see L11 and
  Definition~\ref{def:bagreduct}).
  Table~$\tab{6}$ is then the result of a chain of introduce nodes,
  and contains for each witness set~$M_{6.i}$ every possible
  counter-witness set~$C_{6.i.j}$ with~$C_{6.i.j}\subsetneq M_{6.i}$.
  We now discuss table~$\tab{12}$, intuitively containing (a
  projection of) (counter-)witnesses of~$\tab{10}$, which satisfy
  rule~$r_{bc}$ after introducing rule~$r_{c1}$.  Observe that there
  is no succeeding witness set for~$M_{6.2}=\{e_{ab}\}$ in~$\tab{10}$
  (nor~$\tab{12}$), since $e_{ab}\in M_{6.2}$, but
  $a_b\not\in M_{6.2}$ (required to satisfy~$r_b$).
  Rows~$\tabval_{12.1}, \tabval_{12.4}$ form successors of
  $\tabval_{6.3}$, while rows~$\tabval_{12.2}, \tabval_{12.5}$ succeed
  $\tabval_{6.1}$, since counter-witness set~$C_{6.1.1}$ has no
  succeeding row in~$\tab{10}$ because it does not satisfy~$r_b$.
  Remaining rows~$\tabval_{12.3}$, $\tabval_{12.6}$
  have ``origin''~$\tabval_{6.4}$ in~$\tab{6}$.
\end{example}

\section{Omitted Proofs}

\subsection{Correctness of {\dynaspplus{}}}
Bliem~\etal~\cite{BliemEtAl16b} have shown that augmentable
$\ATab{{\AlgW}}$ can be transformed into $\ATab{{\AlgW,\AlgW}}$, which
easily allows reading off subset-minimal solutions starting at the
table $\ATabs{{\AlgW, \AlgW}}{n}$ for TD root~$n$.
We follow their concepts and define a slightly extended variant of
\emph{augmentable} tables.
Therefore, we reduce the problem of enumerating disjunctive programs
to $\MSatEnum$ and show that the resulting tables of algorithm~\INCSAT
(see Algorithm~\ref{fig:incinc}) are \emph{augmentable}. In the end,
we apply an earlier theorem~\cite{BliemEtAl16b} transforming
$\ATab{\INCSAT}$ obtained by $\dpa_\INCSAT$ into
$\ATab{\INCSAT,\INCSAT}$ via the \emph{augmenting} function
$aug(\cdot)$ proposed in their work.  To this extent, we use auxiliary
definitions~$\Tab{}_t$, $\orig_t(\cdot)$ and $\evol_t(\cdot)$
specified in Algorithm~\ref{fig:dpontd3}.

\begin{definition}%
\label{def:computation}
Let~$\TTT = (T, \chi)$ be a TD where $T = (N,\cdot,\cdot)$, ${\AlgW}$ be a
table algorithm, $t \in N$, and $\tab{} \in \ATab{\AlgW}[t]$ be the
table for node~$t$. %
For tuple~$\vec \tabval = \langle M, \sigma, \cdots \rangle\in\tab{}$,
we define~$\alpha(\vec \tabval) \eqdef M$,
$\beta(\vec \tabval) \eqdef \sigma$.  We inductively
define~\[\alpha^*(\tab{}) \eqdef \bigcup_{\vec u \in \tab{}}
  \alpha(\vec u) \cup \bigcup_{\tab{}'\in\Tab{}_t} \alpha^*(\tab{}'),
  \text{ and }\]
\[\beta^*(\tab{}) \eqdef \bigcup_{\vec u \in
    \tab{}} \beta(\vec u) \cup \bigcup_{\tab{}'\in\Tab{}_t}
  \beta^*(\tab{}').\]
Moreover, we inductively define the \emph{extensions} of a row
$\vec \tabval \in \tab{}$ as
\[\E(\vec \tabval) \eqdef \bigg\SB \{\vec \tabval\} \cup U \mid U \in
  \bigcup_{\{\{\vec \tabval'_1\},\dots,\{\vec \tabval'_k\}\} \in
    \orig_t(\vec \tabval)} \{\tab{1} \cup \dots \cup \tab{k} \mid
  \tab{i} \in \E(\vec \tabval'_i) \text{ for all } 1 \leq i \leq
  k\}\bigg\SE.\]
\end{definition}

\begin{remark}
  Any extension~$U \in \E(\vec \tabval)$ contains $\vec \tabval$ and
  exactly one row from each table that is a descendant of $\tab{}$.
  If $\vec \tabval$ is a row of a leaf table,
  $\E(\vec \tabval) = \{\{\vec \tabval\}\}$ since
  $\orig_t(\vec \tabval) = \{\emptyset\}$
  assuming~$\prod_{i\in\emptyset}S_i = \{()\}$.
\end{remark}

\begin{definition}%
\label{def:solution}
Let $\tab{n}$ be the table in $\ATab{{\AlgW}}$ for TD root~$n$.  We
define the set~$\sol($\ATab{{\AlgW}}$)$ of \emph{solutions} of
$\ATab{{\AlgW}}$ as
$\sol($\ATab{{\AlgW}}$) \eqdef \{\alpha^*(U) \mid \vec \tabval \in
\tab{n},\; U \in \E(\vec \tabval)\}$
\end{definition}

\begin{definition}%
  Let $\tab{}$ be a table in $\ATab{{\AlgW}}$ such that
  $\tab{1}', \dots, \tab{k}'$ are the child tables~$\Tab{}_t$ and let
  $\vec \tabval, \vec v \in \tab{}$.  We say that
  $x \in X(\vec \tabval)$ has been \emph{$X-$illegally introduced} at
  $\vec \tabval$ if there are
  $\{\{\vec \tabval'_1\},\dots,\{\vec \tabval'_k\}\} \in \orig_t(\vec
  \tabval)$ such that for some $1 \leq i \leq k$ it holds that
  $x \notin X(\vec \tabval'_i)$ while $x \in X^*(\tab{i}')$.
  Moreover, we say that $x \in X(\vec v) \setminus X(\vec \tabval)$ has
  been \emph{$X-$illegally removed} at $\vec \tabval$ if there is some
  $U \in \E(\vec \tabval)$ such that $x \in X(U)$.
\end{definition}

\begin{definition}%
\label{def:flat}
We call a table $\tab{}$ \emph{augmentable} if the following conditions hold:
\begin{enumerate}
\item For all rows of the form $\langle M, \cdots, {\cal C} \rangle$,
  we have ${\cal C} = \emptyset$.
\item For all $\vec \tabval, \vec v \in \tab{}$ with
  $\vec \tabval \neq \vec v$ it holds that
  $\alpha(\vec \tabval) \cup \beta(\vec \tabval) \neq \alpha(\vec v)
  \cup \beta(v)$.
\item For all
  $\vec \tabval = \langle M, \sigma, \cdots \rangle \in \tab{}$,
  $\{\{\vec \tabval'_1\},\dots,\{\vec \tabval'_k\}\} \in \orig_t(\vec
  \tabval)$, $1 \leq i < j \leq k$, $I \in \E(\vec \tabval'_i)$ and
  $J \in \E(\vec \tabval'_j)$ it holds that
  $\alpha^*(I) \cap \alpha^*(J) \subseteq M$ and
  $\beta^*(I) \cap \beta^*(J) \subseteq \sigma$.
\item No element of $\alpha^*(\tab{})$ has been $\alpha$\hy illegally
  introduced and no element of $\beta^*(\tab{})$ has been $\beta$\hy
  illegally introduced.
\item No element of $\alpha^*(\tab{})$ has been $\alpha$\hy illegally
  removed and no element of $\beta^*(\tab{})$ has been $\beta$\hy
  illegally removed.
\end{enumerate}
We call $\ATab{{\AlgW}}$ \emph{augmentable} if all its tables are augmentable.
\end{definition}

\trash{
\noindent We now define a transformation of the result obtained by Algorithm~\ref{fig:incinc}
to computations and show that in fact the resulting computation is augmentable.

\begin{definition}
We denote by $trans({\AlgW}, \tab{t}, \chi(t), \prog_t, \atto, \Tab{})$ the transformation of table~$\tab{t}$ using algorithm~${\AlgW}$ to tables (in the sense of Definition~\ref{def:computation}) as follows.
Let $trans(\tab{t})$ be a table where for each element of $\langle M, \dots \rangle \in \tab{t}$, we set~$D(r)\eqdef M, S(r) \eqdef \emptyset$
and $P(r) \eqdef \{ p \mid \tab{t_i}\in \Tab{}, t_1 < \cdots < t_k, p \in \tab{t_1} \times \cdots \times \tab{t_k}, r = {\AlgW}(t, \chi(t), \prog_t, \atto, \{p\})\}$.
\end{definition}

\begin{definition}
We denote by $trans({\AlgW})$ the resulting computation obtained by following the scheme of Algorithm~\ref{fig:dpontd} 
and constructing $trans({\AlgW}, \tab{t}, \chi(t), \prog_t, \atto, \Tab{})$ in every iteration.
\end{definition}
\begin{observation}
Transformation $trans(\INCSAT)$ is an augmentable computation.
\end{observation}}

\noindent It is easy to see that $\ATab{\INCSAT}$ are augmentable, that is,
Algorithm~\ref{fig:incinc} ($\dpa_{\INCSAT}(\cdot)$) computes only
augmentable tables.
\begin{observation}
$\ATab{\INCSAT}$ are augmentable, since $\dpa_{\INCSAT}(\cdot)$ computes
augmentable tables.
$\ATab{\INCCOUNTERSAT}$ are augmentable, since
$\dpa_{\INCCOUNTERSAT}(\cdot)$ computes augmentable tables.
\end{observation}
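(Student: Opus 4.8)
The plan is to check, by structural induction on the nice tree decomposition $\TTT=(T,\chi)$ of $S(\prog)$ traversed bottom-up, that every table $\tab{t}$ produced by $\dpa_{\INCSAT}$ meets all five conditions of Definition~\ref{def:flat}; the claim for $\dpa_{\INCCOUNTERSAT}$ then follows by re-running the same induction with the witness component replaced by the (three-valued) counter-witness component. So for a fixed node~$t$ I would assume the tables of its children are augmentable and argue case by case over $\type(t)\in\{\leaf,\intr,\rem,\join\}$ following Algorithm~\ref{fig:incinc}. Conditions~1 and~2 are immediate: Condition~1 holds because $\INCSAT$ is literally $\INC$ with the blue ($\CCC$-)component deleted, so each row is a pair $\langle M,\sigma\rangle$ read as $\langle M,\sigma,\emptyset\rangle$; for Condition~2 note $M\subseteq\at(\prog)\cap\chi(t)$ while $\sigma\subseteq\prog\cap\chi(t)$ and that atoms and rule-vertices are disjoint in $S(\prog)$, so $M\cup\sigma$ recovers $\langle M,\sigma\rangle$, and since $\tab{t}$ is a set two distinct rows already differ as pairs, hence in their $\alpha\cup\beta$ value.

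The substance lies in Conditions~3--5. Here I would isolate two invariants that $\dpa_{\INCSAT}$ maintains along every root-to-leaf branch and then derive the three conditions from them together with connectedness of TDs (property~(iii) of the TD definition). Invariant~(a): for $a\in\chi(t)$, $a$ belongs to the witness set of a row~$\vec\tabval$ at~$t$ iff it belongs to the witness set of the matching row in \emph{every} extension $U\in\E(\vec\tabval)$ --- this holds because introduce-atom nodes split a row into exactly the two cases $\MAI{M}$ and~$M$, remove-atom nodes merely project, and introduce-rule, remove-rule and join nodes leave $M$-membership of bag atoms untouched. Invariant~(b): for a rule-vertex $r\in\chi(t)$, the predicate ``$r\in\sigma$'' is monotone non-decreasing towards the root for as long as $r$ stays in the bag, because sat-states are combined via $\cup$ at introduce and join nodes and, in a nice TD, $r$ is introduced and removed exactly once on any path, so it cannot leave a bag and re-enter it lower down. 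Invariant~(a) forbids $\alpha$-illegal introductions and removals and invariant~(b) forbids the $\beta$-illegal ones, yielding Conditions~4 and~5; Condition~3 is the specialization at a join node~$t$ with children $t',t''$ and $\chi(t)=\chi(t')=\chi(t'')$, where $\dpa_{\INCSAT}$ combines only rows agreeing on~$M$ and outputs $\sigma=\sigma'\cup\sigma''$, so any atom (rule-vertex) lying in witness sets (sat-states) of extensions of \emph{both} branches lies in $\chi(t)$ by connectedness, hence in $M$ (resp.\ $\sigma$) by invariant~(a) (resp.\ (b)).

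For $\dpa_{\INCCOUNTERSAT}$ the rows are pairs $\langle C,\rho\rangle$ with $C$ a three-valued counter-witness set and $\rho$ its sat-state over the local reduct. By Definition~\ref{def:bagreduct} the reduct commutes with introduce, remove and join exactly as the local-program of Definition~\ref{def:bagprogram} does, and $\INCCOUNTERSAT$ is by construction the blue part of Algorithm~\ref{fig:incinc}, which manipulates $C$ and $\rho$ through the very same operations (set extension at introduce, projection at remove, match-and-union at join, $\cup$ on sat-states). Hence the induction above carries over verbatim with $\alpha=C$ and $\beta=\rho$.

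I expect the one non-routine point to be the bookkeeping behind invariant~(b): one must pin down that a rule-vertex enters $\sigma$ no later along a branch than it first appears in a descendant's sat-state (so the illegal-introduction clause is never triggered), and that the ``$r\in\sigma$'' guard on remove-rule nodes together with single introduction/removal of each vertex genuinely rules out illegal removal. For the counter-witness pass there is the extra obligation to confirm that switching to the local reduct --- i.e.\ the T/F/TW encoding --- does not break this monotonicity; it does not, because only default-negated or choice-rule atoms ever receive TW status, and those are exactly the atoms whose membership in the witness set affects the reduct.
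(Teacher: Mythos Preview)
Your proof sketch is substantially more detailed than what the paper itself provides: the paper states this as an Observation with the preceding remark ``It is easy to see that $\ATab{\INCSAT}$ are augmentable'' and gives no further argument. So there is no paper proof to compare against in any meaningful sense; you have supplied the verification that the authors left implicit.

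Your argument is sound. The disjointness-of-domains trick for Condition~2 is the right observation (atoms and rule-vertices are different vertices of $S(\prog)$, so $M\cup\sigma$ determines the pair), and your two invariants~(a) and~(b) together with connectedness of bags are exactly what is needed for Conditions~3--5. One small remark: for $\INCCOUNTERSAT$ the paper is terse about what the ``blue part'' becomes when detached from a fixed witness~$M$; the three-valued (T/F/TW) encoding you invoke is described only in the implementation section, not in the algorithm itself, so strictly speaking you are filling a gap in the paper's specification of $\INCCOUNTERSAT$ rather than just verifying it. That is fine, and your observation that the reduct-side manipulations mirror the witness-side ones (Definition~\ref{def:bagreduct} versus Definition~\ref{def:bagprogram}) is the right way to close it.
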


The following theorem establishes that we can reduce an instance of
\AspEnum (restricted to disjunctive input programs) when parameterized
by semi-incidence treewidth to an instance of \MSatEnum when
parameterized by the treewidth of its incidence graph.

\newcommand{\Cc}{F}
\newcommand{\Ccp}{F^*}
\newcommand{\F}[1]{F^\text{\normalfont #1}}
\begin{lemma}
\label{thm:asptomm}
Given a disjunctive program~$\prog$ of semi-incidence
treewidth~$k = \tw{S(\prog)}$. We can produce in
time~$\bigO{\CCard{\prog}}$ a propositional formula~$F$ such that the
treewidth~$k'$ of the incidence graph~$I(F)$\footnote{The
  \emph{incidence graph}~$I(F)$ of a propositional formula~$F$ in CNF
  is the bipartite graph that has the variables and clauses of~$F$ as
  vertices and an edge~$v\, c$ if $v$ is a variable that occurs in $c$
  for some clause~$c \in F$~\cite{SamerSzeider10b}.}
is $k' \leq 7k+2$ and the answer sets of~$\prog$ and subset-minimal
models of~$\F{*}$ are in a particular one-to-one correspondence. More
precisely, $M$ is an answer set of $\prog$ if and only if
$M \cup M_{\text{aux}} \cup \{sol\}$ is a subset-minimal model of~$F$
where $M_{\text{aux}}$ is a set of additional variables occurring in~$F$, 
but not in~$P$ and 
variables introduced by Tseitin
normalization. %
\end{lemma}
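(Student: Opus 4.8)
The plan is to make the two-level definition of an answer set visible inside a single CNF formula whose $\subseteq$-minimal models containing $sol$ are exactly (the natural encodings of) the answer sets of $\prog$: recall that $M$ is an answer set iff $M\models\prog$ and \emph{no} $M'\subsetneq M$ satisfies the GL reduct $\prog^M$. First I would fix the variables of $F$: for every atom $a\in\at(\prog)$ take $a$ itself (membership in the candidate $M$), a complement variable $\overline a$, a \emph{spoiler copy} $\widehat a$ (membership in a potential proper model of $\prog^M$), and an \emph{agreement flag} $\mathit{eq}_a$; additionally one global variable $sol$ plus the chain variables used to compile a long conjunction; these fresh variables make up $M_{\text{aux}}$. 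The clauses of $F$ are: (i) for each disjunctive rule $r$ the clausal form of $M\models r$, namely $\bigvee_{a\in H_r}a\vee\bigvee_{a\in B^+_r}\neg a\vee\bigvee_{a\in B^-_r}a$; (ii) $a\vee\overline a$ for each $a$; (iii) $\widehat a\to a$ for each $a$, so that $M'=\{a\mid\widehat a\text{ true}\}\subseteq M$; (iv) for each $r$ the clausal form of $M'\models r^M$, namely $\bigvee_{a\in B^-_r}a\vee\bigvee_{a\in B^+_r}\neg\widehat a\vee\bigvee_{a\in H_r}\widehat a$; (v) Tseitin clauses defining $\mathit{eq}_a\leftrightarrow(\widehat a\vee\neg a)$ (``$a$ is not in $M\setminus M'$''); and (vi) a Tseitin-compiled definition $sol\leftrightarrow\bigwedge_a\mathit{eq}_a$, realised by a chain of auxiliary variables so that each chain variable shares a bag only with one $\mathit{eq}_a$ and its two chain neighbours. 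Each atom contributes a constant number of variables and each rule a constant number of clauses of total length $\bigO{\CCard\prog}$, so $F$ is produced in time $\bigO{\CCard\prog}$.

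For the ($\Rightarrow$) direction I would argue: if $M$ is an answer set then $M\models\prog$ satisfies (i), and because $M$ is $\subseteq$-minimal w.r.t.\ $\prog^M$ the only $M'\subseteq M$ with $M'\models\prog^M$ is $M'=M$. Let $N_M$ set to true exactly $M$, the complements $\overline a$ for $a\notin M$, the spoiler copies of $M$, all $\mathit{eq}_a$, $sol$, and the induced chain variables; this is a model, and I would check it is $\subseteq$-minimal: the $\at(\prog)$-part cannot shrink (dropping $a\in M$ would force $\overline a$, which is \emph{not} in $N_M$), the spoiler part cannot shrink (no proper $M'$ satisfies $\prog^M$), and the flags, $sol$ and chain variables are all forced by (v)--(vi), so nothing is superfluous.

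For the ($\Leftarrow$) direction, let $N$ be a $\subseteq$-minimal model of $F$ with $sol\in N$; by (vi) every $\mathit{eq}_a\in N$, by (v) this forces $M\subseteq M'$, and with (iii) $M'=M$, after which minimality pins $N=N_M$ for some $M\models\prog$. If $M$ were not an answer set there is $M'_0\subsetneq M$ with $M'_0\models\prog^M$, and then the assignment using $M$, the complements $\overline a$ for $a\notin M$, the spoiler copies of $M'_0$, exactly the flags $\mathit{eq}_a$ for $a\notin M\setminus M'_0$, the corresponding chain variables, and $sol$ \emph{false} is a model of $F$ that is a \emph{proper subset} of $N_M$, contradicting minimality; hence $M$ is an answer set. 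This is exactly where the \emph{agreement} flags $\mathit{eq}_a$ (rather than ``difference'' flags) are essential: replacing the spoiler $M=\widehat M$ by a smaller $M'_0$ can only switch an $\mathit{eq}_a$ \emph{off}, never on, so the spoiler-witnessed model genuinely lies below $N_M$.

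Finally, for the treewidth bound I would take a width-$k$ TD of $S(\prog)$, replace in every bag each atom $a$ by its cluster of gadget variables and each rule $r$ by its constantly many clause vertices, and thread the chain variables of (vi) along the tree; every edge of $I(F)$ then lies inside the expansion of a single $S(\prog)$-edge or inside one atom cluster, so the connectedness and edge conditions are inherited bag by bag, and a careful accounting of which gadget variables must truly co-occur turns a bag of size $k+1$ into one of size at most $7k+2$ (in the spirit of the bookkeeping in \cite{BliemEtAl16b,FichteEtAl17}). The map $M\mapsto M\cup M_{\text{aux}}\cup\{sol\}$ is then the claimed bijection between answer sets of $\prog$ and $\subseteq$-minimal models of $F$ containing $sol$, so this lifts \AspEnum (disjunctive, parameterised by $\tw{S(\prog)}$) to \MSatEnum parameterised by incidence treewidth. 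The main obstacle is the ($\Leftarrow$) direction --- ruling out ``spurious'' $\subseteq$-minimal models arising from classical models of $\prog$ that are not answer sets --- which forces the delicate design choices above (agreement flags, and compiling $sol\leftrightarrow\bigwedge_a\mathit{eq}_a$ so as to introduce no variable that the smaller spoiler-model cannot also carry); a secondary difficulty is threading this global conjunction along the decomposition so that the width grows only by a constant factor and the constant in $7k+2$ comes out exactly.
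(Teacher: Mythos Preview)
Your construction is correct, but it takes a genuinely different route from the paper's. The paper does \emph{not} introduce a separate spoiler layer $\widehat a$, complements $\overline a$, or agreement flags; instead it works with a \emph{single} copy $a$ of each atom that plays the role of the inner (spoiler) model, and adds a primed copy $a'$ \emph{only} for atoms $a\in B^-_\prog:=\bigcup_{r\in\prog} B^-_r$ that occur in some negative body --- these are the only atoms on which the reduct $\prog^M$ depends. Their single clause family $H_r\vee\neg B^+_r\vee (B^-_r)'$ then does double duty: with $sol$ true the constraints $a\to a'$ and $sol\to(a'\to a)$ collapse $a\leftrightarrow a'$ and the clause reads $M\models r$; with $sol$ false one may shrink the $a$'s to some $N\subsetneq M$ while the $a'$'s remain pinned to $M\cap B^-_\prog$, and the \emph{same} clause reads $N\models\prog^M$. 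Their minimality gadget is just $\neg sol\to\bigvee_{a\in B^-_\prog}(a'\wedge\neg a)$, which after Tseitin needs one label $l_{a'}$ per $a\in B^-_\prog$ --- no long conjunction is chained through the decomposition. What each approach buys: yours is modular and the bijection argument is cleaner (the complements make ``$M$ cannot shrink'' immediate, and clauses (i) and (iv) are independent); the paper's is leaner because it exploits that for disjunctive rules the reduct is fixed by $M\cap B^-_\prog$ alone, so a second full atom layer is unnecessary.

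That economy is what drives the constant in the treewidth bound. With your gadget a bag of $p$ atoms and $q$ rules already carries at least $4p$ atom-variables and $2q$ rule-clause vertices before local clauses and the chain are accounted for; a straightforward count yields width $c\cdot k+O(1)$ with $c>7$. Your sentence ``a careful accounting \ldots\ turns a bag of size $k+1$ into one of size at most $7k+2$'' is therefore not substantiated for this particular encoding. You certainly obtain $k'=O(k)$, which is all the downstream $2^{2^{O(k)}}$ bound needs, but the specific constant $7k+2$ stated in the lemma comes from the paper's more parsimonious construction; to match it you would either have to adopt their single-layer idea or give the pendant-bag and chain-threading details explicitly and re-do the count.
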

\begin{proof}
  Let $P$ be a disjunctive program of semi-incidence
  treewidth~$k = \tw{S(\prog)}$. First, we construct a formula~$F$
  consisting of a conjunction over formulas~$\F{r}$, $\F{impl}$,
  $\F{sol}$, $\F{min}$ followed by Tseitin normalization of~$F$ to
  obtain~$\F{*}$.
  Among the atoms\footnote{Note that we do not distinguish between
    atoms and propositional variables in terminology here.} of our
  formulas will the atoms~$\at(\prog)$ of the program. Further, for
  each atom~$a$ such that $a \in B^-(r)$ for some rule~$r \in \prog$,
  we introduce a fresh atom~$a'$.  In the following, we denote by $Z'$
  the set $\{z': z\in Z\}$ for any set $Z$ and by
  $B^-_\prog \eqdef \bigcup_{r\in\prog} B^-_r$.  Hence, $(B^-_\prog)'$
  denotes a set of fresh atoms for atoms occurring in any negative
  body.  Then, we construct the following formulas:
  \begin{align}
    \F{r}(r) \eqdef& H_r \vee \neg B^+_r \vee (B^-_r)' && \text{for\ } r\in\prog\\
    \F{impl}(a) \eqdef & a \rightarrow a' && \text{for\ } a\in B^-_\prog\\
    \F{sol}(a) \eqdef & sol \rightarrow (a' \rightarrow a) &&\text{for\ } a\in B^-_\prog\\
    \F{min} \eqdef & \neg sol \rightarrow \bigvee_{a' \in (B^-_\prog)'} (a' \wedge \neg a)\\
    \F{} \eqdef & \bigwedge_{r\in\prog}\F{r}(r) \wedge \bigwedge_{a\in B^-_\prog}\F{impl}(a) \wedge \bigwedge_{a\in B^-_\prog}\F{sol}(a) \wedge \F{min}
  \end{align}

  Next, we show that $M$ is an answer set of $\prog$ if and only if
  $M \cup ([M \cap B^-_\prog])' \cup \{sol\}$ is a subset-minimal
  model of $F$.

  \ponlyif: Let $M$ be an answer set of~$\prog$. %
  We transform~$M$ into
  $Y \eqdef M \cup ([M \cap B^-_\prog])' \cup \{sol\}$.  Observe
  that~$Y$ satisfies all subformulas of~$F$ and therefore
  $Y \models F$.  It remains to show that~$Y$ is a minimal model
  of~$F$.  Assume towards a contradiction that~$Y$ is not a minimal
  model.  Hence, there exists~$X$ with $X \subsetneq Y$. We
  distinguish the following cases:
  \begin{enumerate}
  \item $sol\in X$: By construction of $F$ we have
    $X \models a' \leftrightarrow a$ for any~$a'\in (B^-_\prog)'$, which implies
    that %
    $X \cap \at(\prog) \models \prog^{M}$. However, this contradicts
    our assumption that~$M$ is an answer set of~$\prog$.
  \item $sol\not\in X$: By construction of $F$ %
    there is at least one atom~$a\in B^-_\prog$ with $a'\in X$, but $a\not\in
    X$. Consequently, $X \cap \at(\prog) \models \prog^{M}$. This
    contradicts again that~$M$ is an answer set of~$\prog$.
\end{enumerate}

\pif: %
Given a formula~$F$ that has been constructed from a program~$\prog$
as given above. Then, let $Y$ be a subset-minimal model of~$F$ such
that $sol \in Y$. By construction we have for
every~$a' \in Y \cap (B^-_\prog)'$ that
$a \in Y$. Hence, we let
$M = \at(\prog) \cap Y$. %
Observe that~$M$ satisfies every rule~$r\in\prog$ according to (A1)
and is in consequence a model of $\prog$.  It remains to show that~$M$
is indeed an answer set. Assume towards a contradiction that~$M$ is
not an answer set.  Then there exists a model~$N\subsetneq M$ of the
reduct~$\prog^M$.  We distinguish the following cases:
\begin{enumerate}
\item $N$ is not a model of~$\prog$: We construct
  $X \eqdef N \cup [Y \cap (B^-_\prog)']$ and show that~$X$ is indeed
  a model of~$F$.  For this, for every $r \in \prog$ where
  $B^-(r) \cap M \neq \emptyset$ we have $X\models \F{r}(r)$, since
  $(Y \cap (B^-_\prog)') \subseteq X$ by definition of~$X$.  For
  formulas~$(A1)$ constructed by $\F{r}(r)$ using remaining rules~$r$, we
  also have $X\models \F{r}(r)$, since $N\models\{r\}^M$. In conclusion,
  $X \models F$ and $X\subsetneq Y$, and therefore $X$ contradicts~$Y$
  is a subset-minimal model of~$F$.
\item $N$ is also a model of~$\prog$: Observe that then
  $X \eqdef N \cup [N \cap B^-_\prog]' \cup \{sol\}$ is also a model
  of~$F$, which contradicts optimality of~$Y$ since $X \subsetneq Y$.
\end{enumerate}

\noindent By Tseitin normalization, we obtain~$\Ccp$, thereby
introducing fresh atoms $l_{a'}$ for each $a'\in (B^-_\prog)'$:

\begin{align}
  \F{r*}(r) \eqdef & H_r \vee \neg B^+_r \vee (B^-_r)' && \text{for\ } r\in \prog \tag*{(A1*)}\\
  \F{impl*}(a)  \eqdef & \neg a \vee a' && \text{for\ } a\in B^-_\prog \tag*{(A2*)}\\
  \F{sol*}(a)  \eqdef & \neg sol \vee (\neg a' \vee a) && \text{for\ } a\in B^-_\prog \tag*{(A3*)}\\
  \F{min}_1  \eqdef & sol \vee \bigvee_{a' \in B^-_{\prog'}} (l_{a'}) \tag*{(A4.1*)}\\
  \F{min}_2(a)  \eqdef &\neg a' \vee a \vee l_{a'} && \text{for\ } a \in B^-_\prog \tag*{(A4.2*)}\\
  \F{min}_3(a) \eqdef &\neg l_{a'} \vee a' && \text{for\ } a \in B^-_\prog \tag*{(A4.3*)}\\
  \F{min}_4(a) \eqdef &\neg l_{a'} \vee \neg a && \text{for\ } a \in B^-_\prog \tag*{(A4.4*)}
\end{align}

\noindent Observe that the Tseitin normalization is correct and that
there is a bijection between models of $\Ccp$ and $\Cc$.  %

Observe that our transformations runs in linear time and that the size of~$\F{*}$
is linear in $\CCard{\prog}$. %
It remains to argue that $tw(I(\F{*})) \leq 7k+2$.  For this, assume
that ${\cal T}=(T,\chi,n)$ is an arbitrary but fixed TD of $S(\prog)$
of width~$w$.
We construct a new TD~${\cal T'} \eqdef (T,\chi',n)$ where $\chi'$ is
defined as follows.  For each TD node~$t$, %
\[\chi'(t)\eqdef \bigcup_{a\in B^-_\prog\cap \chi(t)} \{a', l_{a'}\} \cup [\chi(t) \cap \at(\prog)] \cup \{sol\} \cup cl(t)\]
where 
\[cl(t) \eqdef \bigcup_{a\in B^-_\prog\cap \chi(t)} [\F{impl*}(a),
  \F{sol*}(a), \F{min}_2(a), \F{min}_3(a), \F{min}_4(a)] \cup
  \{\F{min}_1\} \cup \bigcup_{r\in \prog\cap \chi(t)} \{\F{r*}(r)\}.\]
It is easy to see that~${\cal T'}$ is indeed a TD for $I(\Ccp)$ and
that the width of~${\cal T'}$ is at most~$7w + 2$.
\hfill
\end{proof}

\trash{
\begin{corollary}
Enumerating answer sets of any disjunctive ASP program~$\prog$ (problem \AspEnum) of bounded primal treewidth~$k$ ($tw(P(\prog)) = k$)
can be reduced to enumerating \MSat\ models (problem \MSatEnum) in time~$f(k)\cdot \CCard{\prog}$.
\end{corollary}}

\trash{
\begin{proposition}
\MSatEnum is \SIGMA{2}{p} complete, \MSatEnum\ is $F\SIGMA{2}{p}$ complete.
\end{proposition}
\begin{proof}
Membership of \MSatEnum can be shown via stating a non-deterministic algorithm with an $\NP$ oracle.
Completeness follows from Theorem~\ref{thm:asptomm}.
\end{proof}}

\begin{definition}%
\label{def:augc}
We inductively define an \emph{augmenting} function $\aug(\ATab{{\AlgW}})$ that maps each table $\tab{}\in\ATab{{\AlgW}}[t]$ for node $t$ 
from an augmentable table to a table in $\ATab{{\AlgW, \AlgW}}[t]$.
Let %
the child tables of $\tab{}$ be called $\tab{1}', \dots, \tab{k}'$.
For any $1 \leq i \leq k$ and $\vec \tabval \in \tab{i}$, we write $\res(\vec \tabval)$ to denote
$\{\vec v \in \aug(\tab{i}') \mid \alpha(\vec \tabval) = \alpha(\vec v)\}$.
We define $\aug(\tab{})$ as the smallest table that satisfies the following conditions:
\begin{enumerate}
\item
For any $\vec \tabval \in \tab{}$, $\{\{\vec \tabval'_1\}, \dots, \{\vec \tabval'_k\}\} \in \orig_t(\vec \tabval)$ and 
$\{\{\vec v'_1\}, \dots, \{\vec v'_k\}\} \in {\hat{\prod}_{1 \leq i \leq k}} \res(\vec \tabval'_i)$,
there is a row $\vec v \in \aug(\tab{})$
with
$\alpha(\vec \tabval) = \alpha(\vec v)$ and
$\{\{\vec v'_1\}, \dots, \{\vec v'_k\}\} \in \orig_t(\vec v)$.
\item For any $\vec \tabval, \vec v \in \aug(\tab{})$ with
  $\vec \tabval=\langle \cdots, {\cal C} \rangle$ such that
  $\alpha(\vec v) \subseteq \alpha(\vec \tabval)$,
  $\{\{\vec \tabval'_1\}, \dots, \{\vec \tabval'_k\}\} \in
  \orig_t(\vec \tabval)$ and
  $\{\{\vec v'_1\}, \dots, \{\vec v'_k\}\} \in \orig_t(\vec v)$ the
  following holds: Let $1 \leq i \leq k$ with
  $\vec \tabval'_i=\langle \cdots, {\cal C}_i \rangle$,
  $\vec c_i=\langle C_i, \cdots \rangle \in ({\cal C}_i \cup \{\vec
  \tabval'_i\})$ with $C_i \subseteq \alpha(\vec v'_i)$,
  and $1 \leq j \leq k$ with $\vec c_j \neq \vec
  \tabval'_j$ or $\alpha(\vec v) \subsetneq \alpha (\vec u)$.
  Then, there is a row $\vec c \in {\cal C}$ with
  $\alpha(\vec c) \subseteq \alpha(\vec v)$ if and only if $\vec
  c\in\tab{}$ and $\{\{\vec c_1\}, \dots, \{\vec c_k\}\} \in
  \orig_t(\vec c)$.
\end{enumerate}
For $\ATab{{\AlgW}}$, we write $\aug(\ATab{{\AlgW}})$ to denote 
the result isomorphic to $\ATab{{\AlgW, \AlgW}}$ 
where each table $\tab{}$ in $\ATab{{\AlgW}}$ corresponds to $\aug(\tab{})$.
\end{definition}

\begin{proposition}%
\label{thm:aug-solutions-minimal}
Let $\ATab{{\AlgW}}$ be augmentable.  Then,
\[\sol(\aug(\ATab{{\AlgW}})) = \{M \in \sol(\ATab{{\AlgW}}) \mid
  \nexists M' \in \sol(\ATab{{\AlgW}}): M' \subsetneq M\}.\]
\end{proposition}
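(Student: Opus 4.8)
The plan is to derive the equality from a single structural invariant on $\aug(\ATab{\AlgW})$, established by induction along the tree decomposition; this follows, and mildly extends, the argument underlying the augmentation theorem of \cite{BliemEtAl16b}. Fix the augmentable collection $\ATab{\AlgW}$ over the nice TD $\TTT=(T,\chi)$, $T=(N,\cdot,n)$, and for a node $t$, a row $\vec\tabval=\langle M,\sigma,\CCC\rangle\in\aug(\tab{t})$ and an extension $U\in\E(\vec\tabval)$ write $\hat M\eqdef\alpha^*(U)$; recall that for an $\ATab{\AlgW,\AlgW}$\hy typed collection one reads off $\sol(\cdot)$ from the root by collecting $\alpha^*(U)$ over those root rows whose counter\hy witness family is empty, as in \cite{BliemEtAl16b}. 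A first, routine observation is that $\aug$ does not change the partial solutions below any node: by condition~1 of Definition~\ref{def:augc} every row of $\aug(\tab{t})$ has the same $\alpha$\hy component as some row of $\tab{t}$ and carries, via $\orig_t$, an origin structure copied from it, so a bottom\hy up induction gives $\{\alpha^*(U)\mid\vec\tabval\in\aug(\tab{n}),\,U\in\E(\vec\tabval)\}=\sol(\ATab{\AlgW})$ and, locally, that the values $\hat M$ arising below $t$ in $\aug(\ATab{\AlgW})$ are exactly those arising below $t$ in $\ATab{\AlgW}$.

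The heart of the proof is the invariant, proved by induction on $\post(T,n)$: \emph{for every $\vec\tabval=\langle M,\sigma,\CCC\rangle\in\aug(\tab{t})$ and $U\in\E(\vec\tabval)$, a pair $\langle C,\cdots\rangle$ belongs to $\CCC$ with its origin linked to that of $\vec\tabval$ as prescribed by condition~2 of Definition~\ref{def:augc} if and only if there are a row $\vec c\in\tab{t}$ with $\alpha(\vec c)=C\subseteq M$ and an extension $V\in\E(\vec c)$ with $\alpha^*(V)\subsetneq\hat M$ and $\alpha^*(V)\cap\chi(t)=C$}; in words, the counter\hy witness family of $\vec\tabval$ stores exactly the rows that continue below $t$ to a strictly smaller partial solution whose bag footprint is dominated by that of $\vec\tabval$. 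The leaf case is immediate: a leaf table of a nice TD (for the algorithms at hand) is the single empty row, which has no strictly smaller extension, and $\aug$ leaves its counter\hy witness family empty. In the inductive step, with children $t_1,\dots,t_k$, one unfolds condition~2 of Definition~\ref{def:augc}: the members of $\CCC$ are exactly the $\vec c\in\tab{t}$ whose origins pick on each branch $i$ either a genuine counter\hy witness $\vec c_i$ of some origin row $\vec\tabval'_i$ of $\vec\tabval$ or the origin row $\vec\tabval'_i$ itself, with at least one branch genuinely forcing a strict decrease. Applying the induction hypothesis to each $\vec c_i$ yields branch\hy wise extensions $V_i$, and conditions~3--5 of Definition~\ref{def:flat}---no $\alpha$\hy illegally introduced or removed element, and pairwise branch overlaps confined to $\chi(t)$---are precisely what is needed to glue the $V_i$ into a single $V\in\E(\vec c)$ with $\alpha^*(V)\subsetneq\hat M$ and $\alpha^*(V)\cap\chi(t)=C$, the converse being obtained by running the glueing backwards. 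I expect this step---in particular the join\hy node case, with several branches and ``mixed'' counter\hy witnesses that are genuine on some branches and equal to the witness on others, together with the clone bookkeeping triggered by rows that share a witness but have different origins---to be the main obstacle; since it is essentially the corresponding lemma of \cite{BliemEtAl16b}, I would adopt their proof and only re\hy verify that our additional sat\hy state component $\sigma$ is transported inertly and never enters an $\alpha$\hy based subset test.

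Finally I would specialise the invariant to the root $n$, where $\chi(n)=\emptyset$: every counter\hy witness pair of a root row then has the form $\langle\emptyset,\cdots\rangle$, the side conditions become vacuous, and the original root table $\tab{n}$ has a single row (its bag is empty), so that by the invariant \emph{every} root row $\vec\tabval$ of $\aug(\tab{n})$ with $U\in\E(\vec\tabval)$ and value $\hat M=\alpha^*(U)$ has $\CCC=\emptyset$ if and only if no row of $\tab{n}$ extends to any value $\subsetneq\hat M$, i.e.\ if and only if there is no $M'\in\sol(\ATab{\AlgW})$ with $M'\subsetneq\hat M$. Since $\sol(\aug(\ATab{\AlgW}))$ collects exactly the values $\alpha^*(U)$ over root rows with empty counter\hy witness family---and, dropping that emptiness constraint, those values range over all of $\sol(\ATab{\AlgW})$ by the routine observation above---we conclude $\sol(\aug(\ATab{\AlgW}))=\{M\in\sol(\ATab{\AlgW})\mid\nexists M'\in\sol(\ATab{\AlgW})\colon M'\subsetneq M\}$, proving both inclusions at once.
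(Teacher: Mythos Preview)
Your proposal is correct and takes essentially the same approach as the paper: both defer to the augmentation argument of \cite{BliemEtAl16b} and identify as the only real adaptation that the auxiliary sat\hy state component $\sigma$ (the paper's $\beta$) is carried along inertly and never enters the $\alpha$\hy based subset comparisons. The paper's proof is a short sketch listing deltas to \cite{BliemEtAl16b} (auxiliary $\beta$, only strictly smaller counter\hy witnesses, no compression), whereas you spell out the intended invariant and induction explicitly; that is an expansion of, not a departure from, the paper's argument.
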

\begin{proof}[Proof (Sketch)]
  The proof follows previous work~\cite{BliemEtAl16b}. We sketch only
  differences from their work. Any row~$\vec
  \tabval\in\tab{}$ of any
  table~$\tab{}$ not only consists of set~$\alpha(\vec
  \tabval)$ being subject to subset-minimization and relevant to
  solving \AspEnum.  In addition, our definitions presented above also
  allow ``\emph{auxiliary}'' sets~$\beta(\vec \tabval)$ per row~$\vec
  \tabval$, which are not subject to the minimization.  Moreover, by
  the correctness of the table algorithm \IINC by Fichte and
  Szeider~\cite{FichteEtAl17}, we only require to store a set~${\cal
    C}$ of counter-witnesses~$\langle C, \cdots \rangle \in {\cal
    C}$ per witness set~$M$, where each
  $C$ forms a \emph{strictly} $\subset$-smaller model
  of~$M$. As a consequence, there is no need to differ between sets of
  counter-witnesses, which are strictly included or not,
  see~\cite{BliemEtAl16b}.  Finally, we do not need to care about
  duplicate rows (solved via compression function
  $compr(\cdot)$ in~\cite{BliemEtAl16b}) in $\tab{}$, since
  $\tab{}$ is a set.  \hfill
\end{proof}

\begin{theorem}\label{thm:disjaug}
  \AspEnum when the input is restricted to disjunctive programs can be
  solved in time~$2^{2^{(7k+4)}} \cdot \CCard{\prog}$ computing
  $\aug(\dpa_\INCSAT(\cdot))$, where~$k$ refers to the treewidth of
  $S(\prog)$. %
\end{theorem}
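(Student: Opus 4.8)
The plan is to turn \AspEnum on disjunctive programs into enumeration of subset-minimal models of a CNF, run a \SAT\ dynamic program whose tables are \emph{augmentable}, and then invoke the augmentation machinery of Definition~\ref{def:augc} and Proposition~\ref{thm:aug-solutions-minimal} to read off the minimal solutions. First, given a disjunctive program~$\prog$ with $k=\tw{S(\prog)}$, I would compute a width-$k$ TD of~$S(\prog)$ in time~$2^{\bigO{k^3}}\cdot\CCard{\prog}$~\cite{BodlaenderKoster08} and feed it to Lemma~\ref{thm:asptomm}. This produces, in time~$\bigO{\CCard{\prog}}$, the CNF~$\Ccp$ together with the TD~${\cal T'}=(T,\chi',n)$ of~$I(\Ccp)$ of width at most~$7k+2$ in which (by the explicit construction in that proof) the atom~$sol$ already lies in \emph{every} bag, and such that $M\mapsto M\cup M_{\text{aux}}\cup\{sol\}$ is a one-to-one correspondence between answer sets of~$\prog$ and subset-minimal models of~$\Ccp$, where $M_{\text{aux}}$ collects the fresh negative-body copies and the Tseitin atoms.

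Next I would make~${\cal T'}$ nice without increasing its width~\cite{BodlaenderKoster08}, view the clauses of~$\Ccp$ as disjunctive rules --- clausal satisfaction coincides with rule satisfaction, and $S(\Ccp)=I(\Ccp)$ since there are no choice rules, so ${\cal T'}$ is also a TD of~$S(\Ccp)$ --- and run $\dpa_{\INCSAT}$ on~${\cal T'}$, which is the $\dpa_{\INCSAT}(\cdot)$ appearing in the statement. By the observation above, $\ATab{\INCSAT}$ is augmentable, and by correctness of~$\dpa_{\INCSAT}$~\cite{FichteEtAl17} we have $\sol(\ATab{\INCSAT})=\{M\mid M\models\Ccp\}$. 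Applying the augmenting function of Definition~\ref{def:augc} and Proposition~\ref{thm:aug-solutions-minimal} yields $\sol(\aug(\ATab{\INCSAT}))$ equal to the set of subset-minimal models of~$\Ccp$. I would then enumerate those minimal models~$M$ with~$sol\in M$ by descending from the root table~$\ATabs{{\INCSAT,\INCSAT}}{n}$ to the leaves and keeping each glued-together solution~$\alpha^*(U)$ with~$sol\in\alpha^*(U)$, exactly as in~\cite{BliemEtAl16b}; projecting away~$M_{\text{aux}}$ and~$sol$ then lists, without repetition, the answer sets of~$\prog$ by Lemma~\ref{thm:asptomm}.

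For the running time, ${\cal T'}$ has~$\bigO{\CCard{\prog}}$ nodes and bags of size at most~$7k+3$. A row of an augmented table is fixed by a witness set together with a sat-state (jointly a subset of the at most~$7k+3$ bag elements), hence at most~$2^{7k+3}$ possibilities, together with a family~$\CCC$ of counter-witness rows, each again determined by a subset of the bag, hence at most~$2^{2^{7k+3}}$ possibilities; so every table has at most~$2^{7k+3}\cdot 2^{2^{7k+3}}\le 2^{2^{7k+4}}$ rows. Each local table operation --- the cases of Algorithm~\ref{fig:incinc}, the linking step of Algorithm~\ref{fig:dpontd3}, and the augmentation of Definition~\ref{def:augc} --- is polynomial in the sizes of the involved tables, so, accounting as in Proposition~\ref{thm:inc:runtime} and summing over all nodes, the total is~$2^{2^{7k+4}}\cdot\CCard{\prog}$; the initial $2^{\bigO{k^3}}\cdot\CCard{\prog}$ term for the TD of~$S(\prog)$ is dominated.

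The hard part is Proposition~\ref{thm:aug-solutions-minimal}: one has to replay the correctness argument of Bliem~\etal~\cite[Theorems~3.25 and~3.26]{BliemEtAl16b} for the \emph{extended} notion of augmentable table in Definition~\ref{def:flat}, which now carries the auxiliary sat-state component~$\beta$ --- not subject to subset-minimization --- alongside the minimization-relevant component~$\alpha$, and in which the counter-witnesses are always \emph{strictly} $\subset$-smaller models, so that the $\subseteq$-versus-$\subsetneq$ bookkeeping and the compression function of~\cite{BliemEtAl16b} are no longer needed (tables here are genuine sets). A smaller point to get right is that the width bound~$7k+2$ of Lemma~\ref{thm:asptomm} already accounts for~$sol$ appearing in every bag, which is what keeps the exponent at~$7k+4$.
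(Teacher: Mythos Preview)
Your proof is correct and follows essentially the same route as the paper: reduce via Lemma~\ref{thm:asptomm} to \MSatEnum, run $\dpa_{\INCSAT}$ on a TD of~$I(\Ccp)$, observe augmentability, apply Proposition~\ref{thm:aug-solutions-minimal}, and enumerate the minimal models containing~$sol$. The one noteworthy difference is in how the TD of~$I(\Ccp)$ is obtained: the paper computes a \emph{fresh} TD of~$I(\Ccp)$ via Bodlaender and then inserts~$sol$ into every bag, whereas you directly reuse the TD~${\cal T'}$ built inside the proof of Lemma~\ref{thm:asptomm}, which already has~$sol$ in every bag and width~$\leq 7k+2$. Your choice is slightly cleaner (no extra $+1$ on the width from re-adding~$sol$) and makes the exponent~$7k+4$ easier to justify; otherwise the two arguments are the same.
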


\begin{proof}[Proof (Sketch)]
  First, we use reduction~$R(\prog,k)=(\Ccp, k')$ defined in
  Lemma~\ref{thm:asptomm} to construct an instance of SAT given our
  disjunctive ASP program~$\prog$.  Note
  that~$k'=tw(I(\Ccp))\leq 7k + 2$.  Then, we can compute in
  time~$2^{\bigO{k'^3}} \cdot \Card{I(\Ccp)}$ a tree decomposition of
  width at most~$k'$~\cite{BodlaenderKoster08}. Note that since we require
  to look for solutions containing~$sol$ at the root, we modify each
  bag of~${\cal T}$ such that it contains~$sol$. We call the resulting
  tree decomposition~${\cal T'}$.  We compute
  $\aug(\dpa_\INCSAT({\cal T'}))$ using formula~$\Ccp$ (see
  Algorithm~\ref{fig:incinc}).  Finally, by
  Theorem~\ref{thm:aug-solutions-minimal} and Lemma~\ref{thm:asptomm},
  we conclude that answer sets of~$\prog$ correspond to
  $\{M \in \sol(\aug(\dpa_\INCSAT({\cal T'}))) \mid sol\in M, \nexists
  M' \in \sol(\dpa_\INCSAT({\cal T'})): M' \subsetneq M\}$.

  The complexity proof sketched in \cite{BliemEtAl16b} only cares
  about the runtime being polynomial.  In fact, the algorithm can be
  carried out in linear time, following complexity proofs presented by
  Fichte~\etal~\cite{FichteEtAl17}, which leads to a worst-case
  runtime of~$2^{2^{(7k+4)}} \cdot \CCard{\prog}$.  \hfill
\end{proof}

\trash{
\begin{corollary}
Problem \AspEnum restriced to disjunctive programs can be solved in time~$f(k) \cdot \CCard{\prog}$
computing $\aug(trans(\PRIMSAT))$, where~$k$ 
refers to the treewidth of $P(\prog)$ for some computable function~$f$.
\end{corollary}}

\noindent We can now even provide a ``constructive definition'' of the
augmenting function~$\aug(\cdot)$.

\begin{proposition}
  The resulting table~$\aug(\ATab{{\AlgW}})$ obtained
  via~$\dpa_{{\AlgW}}(\cal T)$ for any TD~${\cal T}$ is equivalent to
  the table~$\cw_{{\AlgW},{\AlgW}}(\cal T)$ as given in
  Algorithm~\ref{fig:dpontd3}.
\end{proposition}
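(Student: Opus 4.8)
The plan is to prove the equivalence by structural induction on the nodes of~$\TTT$, visited in post-order exactly as both $\aug(\cdot)$ of Definition~\ref{def:augc} and $\cw_{\AlgW,\AlgW}(\TTT)$ of Algorithm~\ref{fig:dpontd3} do. First I would fix the precise meaning of \emph{equivalent}: I claim a natural bijection between the rows of $\aug(\tab{})$ at a node~$t$ and the rows of the table $\cw_{\AlgW,\AlgW}$ produces at~$t$, preserving the witness part~$\alpha(\vec\tabval)$, the auxiliary part~$\beta(\vec\tabval)$, the origin structure~$\orig_t(\vec\tabval)$, and the attached counter-witness family~$\CCC$. Since $\aug$ is defined only on augmentable $\ATab{\AlgW}$ (Definition~\ref{def:augc}), I take augmentability as a standing hypothesis; then Condition~1 of Definition~\ref{def:augc} fixes \emph{which} rows~$\vec v$ occur (one per witness~$\alpha(\vec\tabval)$ and per admissible combination of augmented child rows), and this must be matched against the outer comprehension in Line~\ref{algo3:link}, which emits exactly one linked row per pair $(\vec\tabval,S)$ with $S\in\orig_t(\vec\tabval)$.

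For the base case I would unfold both definitions at a leaf. By the Remark following Definition~\ref{def:computation} one has $\orig_t(\vec\tabval)=\{\emptyset\}$ and $\E(\vec\tabval)=\{\{\vec\tabval\}\}$ at a leaf, so Condition~1 of $\aug$ merely reproduces each witness row and Condition~2 attaches precisely the family~$\CCC$ forced by its ``if and only if''. On the algorithmic side, $\cw_{\AlgW,\AlgW}$ evaluates $\evol_t$ and $\subsets_\prec$ on the empty origin and links the same counter-witnesses; after unfolding, the two tables coincide.

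The inductive step carries the real work, and I would match the two definitions case by case, assuming the child tables $\aug(\tab{i}')$ already agree with $\ATabs{\AlgW,\AlgW}{t_i}$. For row generation, Condition~1 of Definition~\ref{def:augc} ranges over $\{\{\vec v_1'\},\dots,\{\vec v_k'\}\}\in\hat{\prod}_{1\le i\le k}\res(\vec\tabval_i')$, i.e.\ over admissible augmented child rows whose witnesses agree with the origins of~$\vec\tabval$; once the induction hypothesis identifies $\res(\cdot)$ with the augmented child tables, this is exactly the index set $S\in\orig_t(\vec\tabval)$ of Line~\ref{algo3:link}. For the counter-witness part, I would unfold $\subsets_{\subsetneq}(f_w,M,S)\cup\subsets_{\subseteq}(f_{cw},M,S)$ through $\evol_t$ and show it equals the family characterised by Condition~2 of $\aug$: the $f_{cw}$-branch, filtered by $\subseteq$, collects counter-witnesses inherited through $\orig$ from the child families~$\CCC_i$ (the case $\vec c_i\in\CCC_i$), while the $f_w$-branch, filtered by the strict $\subsetneq$, collects witnesses that first become genuine counter-witnesses at~$t$ (the case $\vec c_i=\vec\tabval_i'$ together with the strictness disjunction ``$\vec c_j\neq\vec\tabval_j'$ or $\alpha(\vec v)\subsetneq\alpha(\vec\tabval)$''). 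Passing each combination through the table algorithm~$\AlgC$ and filtering by $C\prec M$ is precisely the evolution step that $\subsets_\prec$ performs, so the two families agree term by term.

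The main obstacle I anticipate is reconciling the declarative smallest-fixpoint formulation of $\aug$ with the operational post-order computation of $\cw$. Condition~2 is simultaneously a generation and a restriction condition (its ``if and only if''), so I must argue the algorithm introduces neither spurious nor missing counter-witnesses; the ``smallest table'' qualifier supplies one inclusion while the explicit comprehension in Line~\ref{algo3:link} supplies the other. A related subtlety is the handling of duplicate witnesses: rows sharing the same~$M$ but arising from distinct origins~$S$ are kept apart both by the per-origin quantification in Condition~1 and by the $S\in\orig_t(\vec\tabval)$ index in Line~\ref{algo3:link} (the ``clone'' mechanism of the implementation), and I would verify that this correspondence is exact rather than merely up to witness equality. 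Correctness of the underlying linking is then inherited from~\cite{BliemEtAl16b} together with Proposition~\ref{thm:aug-solutions-minimal}, which guarantees that the linked tables read off the subset-minimal solutions in the same way.
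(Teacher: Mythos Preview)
Your proposal is sound and in fact more systematic than the paper's own argument, which is explicitly labelled a ``Proof (Idea)'' and stays at sketch level. Where you proceed by structural induction on the nodes of~$\TTT$ and match the two branches $\subsets_{\subsetneq}(f_w,\cdot,\cdot)$ and $\subsets_{\subseteq}(f_{cw},\cdot,\cdot)$ term-by-term against the two cases of Condition~2 of Definition~\ref{def:augc} (namely $\vec c_i=\vec\tabval_i'$ with the strictness disjunction, versus $\vec c_i\in\CCC_i$), the paper instead argues the counter-witness completeness direction by a minimal-counterexample style contradiction: it supposes some counter-witness is missed, locates the \emph{first} node~$t$ at which this happens, and observes that the auxiliary parts $\beta(\vec v)$ and $\beta(\vec\tabval)$ of the two relevant rows must then be $\subseteq$-incomparable at~$t$, which forces the first missing encounter to lie in a strictly earlier table---contradicting minimality. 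Your direct unfolding is closer to an actual proof and makes the role of the two $\subsets$ branches explicit; the paper's first-encounter argument is terser but leans on a structural observation about the sat-states~$\beta$ that your route never needs to invoke. Both approaches ultimately rely on the augmentability hypothesis (Definition~\ref{def:flat}) to exclude duplicate rows and illegal introductions/removals, which you correctly take as standing.
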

\begin{proof}[Proof (Idea)]
Intuitively, (1.) of Definition~\ref{def:augc}
concerns completeness, i.e., ensures that no row is left out during the augmentation,
and is ensured by row~7 of Algorithm~\ref{fig:dpontd3} since each~$\vec \tabval\in\ATab{{\AlgW}}$
is preserved. The second condition (2.) enforces that there is no missing
counter-witness for any witness, and the idea is that 
whenever two witnesses~$\vec \tabval, \vec v \in \tab{}$ are in a subset relation ($\alpha(\vec v) \subseteq \alpha(\vec \tabval)$)
and their corresponding linked counter-witnesses ($f_{cw}$) of the corresponding origins ($\orig$)
are in a strict subset relation, then there is some counter-witness~$c$ for~$\tabval$
if and only if $\vec c\in\tab{}$ is the successor ($\evol$) of these corresponding linked counter-witnesses.
Intuitively, we can not miss any counter-witnesses in $\cw_{{\AlgW},{\AlgW}}(\cal T)$ required by (2.),
since this required that there are two rows $\vec \tabval', \vec v'\in\tab{}'$ with $\alpha(\vec v) = \alpha(\vec \tabval)$ 
for one table~$\tab{}'$. Now let the corresponding succeeding rows~$\vec \tabval, \vec v\in\tab{}$ 
(i.e., $\vec \tabval\in\evol_t(\{\{\vec \tabval'\}\}), \vec v\in\evol_t(\{\{\vec v'\}\})$, respectively) with $\alpha(\vec v) \subsetneq \alpha(\vec \tabval)$,
$\beta(\vec v) \not\subseteq \beta(\vec \tabval)$ and $\beta(\vec v) \not\supseteq \beta(\vec \tabval)$, mark the first
encounter of a missing counter-witness. Since $\beta(\vec v)$ is incomparable to $\beta(\vec \tabval)$,
we conclude that the first encounter has to be in a table preceeding~$\tab{}$. %
To conclude, one can show that $\cw_{{\AlgW},{\AlgW}}(\cal T)$ 
does not contain ``too many'' rows, which do not fall under conditions (1.) and (2.).
\end{proof}

\noindent Theorem~\ref{thm:disjaug} works not only
for disjunctive ASP via reduction to $\MSatEnum$, where
witnesses and counter-witnesses are derived with the same table algorithm~$\INCSAT$.
In fact, one can also link counter-witnesses to witnesses by means of $\cw_{{\AlgW}, {\AlgC}}(\cdot)$,
thereby using table algorithms ${\AlgW}, {\AlgC}$ for computing witnesses
and counter-witnesses, respectively. In order to show correctness of algorithm~$\cw_{\INCSAT, \INCCOUNTERSAT}(\cdot)$ (Theorem~\ref{thm:mdp}) working for any ASP program, it is required to extend the definition of the augmenting function~$aug(\cdot)$ such that it is capable of using two different tables.

\longversion{
\begin{proposition}
Problem \AspEnum can be solved in time~$f(k) \cdot \CCard{\prog}$
computing $\cw_{\INCSAT, \INCCOUNTERSAT}(\cdot)$, where~$k$ refers to the treewidth of 
$S(\prog)$ and~$f$ is a computable function.
\end{proposition}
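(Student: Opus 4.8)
The plan is to lift the augmentable-table framework of Definitions~\ref{def:computation}--\ref{def:augc} and Proposition~\ref{thm:aug-solutions-minimal} from a single table algorithm to the pair $(\AlgW,\AlgC)=(\INCSAT,\INCCOUNTERSAT)$, and then to verify that $\cw_{\INCSAT,\INCCOUNTERSAT}$ of Algorithm~\ref{fig:dpontd3} realizes the corresponding two-algorithm augmentation, exactly as the preceding proposition relates $\aug(\ATab{\AlgW})$ and $\cw_{\AlgW,\AlgW}$. Fix a nice TD $\TTT=(T,\chi)$ of $S(\prog)$ of width $k$. As building blocks I would use: by the correctness of $\dpa_{\INC}$ (Proposition~\ref{thm:inc:runtime}) restricted to its witness component, $\dpa_{\INCSAT}$ computes an augmentable family $\ATab{\INCSAT}$ with $\sol(\ATab{\INCSAT})$ equal to the set of models of $\prog$; restricted to its counter-witness component, $\dpa_{\INCCOUNTERSAT}$ computes an augmentable family $\ATab{\INCCOUNTERSAT}$ whose rows at a node $t$, for a fixed witness set $M$, are exactly the strictly-$\subset$-smaller candidate models of the local GL-reduct together with their reduct-sat-states. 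This is what makes the separation into Passes~3.I and~3.II sound, and the purging steps are harmless because, by the connectedness condition of TDs, a purged row never extends to the root and hence never contributes to a model or counter-model of $\prog$.

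Next I would define a generalized augmenting function $\aug(\ATab{\AlgW},\ATab{\AlgC})$ that consumes both families and attaches to every witness row $\vec\tabval$ with $\alpha(\vec\tabval)=M$ the set of $\AlgC$-rows reachable from the $\AlgC$-rows linked to the origins of $\vec\tabval$ via compatible evolutions and having counter-witness set strictly below $M$ (this is exactly what Line~\ref{algo3:link} expresses, with $\subsets_\subsetneq$ for the ``own'' witness part and $\subsets_\subseteq$ for the counter part). The conditions of Definition~\ref{def:augc} are adapted so that in condition~(2) the rows $\vec c_i$ range over the $\AlgC$-table at the $i$-th child instead of over the counter-witness family stored inside the $\AlgW$-row. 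With this definition the analogue of Proposition~\ref{thm:aug-solutions-minimal} reads: the solutions of $\aug(\ATab{\INCSAT},\ATab{\INCCOUNTERSAT})$ whose attached counter-witness family is empty are precisely the members of $\sol(\ATab{\INCSAT})$ that are subset-minimal with respect to the reduct, i.e.\ the answer sets of $\prog$. Its proof follows~\cite{BliemEtAl16b}, with the two modifications already present in Proposition~\ref{thm:aug-solutions-minimal} (auxiliary non-minimized sat-states $\beta,\rho$; counter-witnesses are \emph{strictly} smaller reduct models, so no $\subseteq$-versus-$\subsetneq$ case split is needed on the counter side) plus the observation that splitting the bookkeeping between $\AlgW$ and $\AlgC$ does not change which rows survive. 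Finally, an induction over the nodes of $T$ in post-order shows that $\cw_{\INCSAT,\INCCOUNTERSAT}(\TTT)$ computes this generalized augmentation, since $\orig_t$ and $\evol_t$ are exactly the pointer relations ``preceding combined child rows under $\AlgW$'' and ``succeeding rows under $\AlgC$'', and Line~\ref{algo3:link} instantiates condition~(1) (every witness is preserved) and condition~(2) (every witness is linked to all and only its counter-witnesses). Specializing to $\AlgW=\INCSAT$, $\AlgC=\INCCOUNTERSAT$ gives correctness for arbitrary disjunctive-and-choice programs, and counting and optimization rules are inherited from~\cite{FichteEtAl17} by carrying counters/costs in the rows, which does not interact with the linking step.

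For the running time, a nice TD of $S(\prog)$ of width $k$ has $\bigO{\CCard{\prog}}$ nodes (for fixed $k$, since the semi-incidence graph adds at most $\bigO{k^2}$ edges per bag). Each of the three passes performs a constant number of traversals of $T$; at every node the tables $\ATabs{\INCSAT}{t}$, $\ATabs{\INCCOUNTERSAT}{t}$ and the linked table $\ATabs{\INCSAT,\INCCOUNTERSAT}{t}$ have size bounded by a function of $k$ (concretely $\bigO{2^{2^{k+2}}}$, as for $\dpa_{\INC}$) and are computed from the child tables in time bounded by a function of $k$. Multiplying the number of nodes by the per-node cost, computing $\cw_{\INCSAT,\INCCOUNTERSAT}(\cdot)$ together with its two preparatory passes runs in time $f(k)\cdot\CCard{\prog}$ for a computable $f$.

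The main obstacle is the correctness of the generalized linking, i.e.\ extending the argument of Bliem~\etal\ (their Theorems~3.25 and~3.26) from one table algorithm to two. The delicate points are (i)~condition~(2) of Definition~\ref{def:augc}, where a single witness reached via several origins may require different sets of counter-witnesses, which is what forces the lightweight ``clone'' proxies described in the implementation section and must be shown to lose no counter-witness; and (ii)~verifying that the reduct sat-states $\rho$ produced by $\INCCOUNTERSAT$ are precisely the information needed, along an evolution, to certify that a counter-witness set is genuinely a strictly smaller model of the reduct, so that incomparable sat-states can neither spuriously merge nor spuriously split counter-witness sets across the $\AlgW/\AlgC$ boundary. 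Once these are settled, the node-type case analysis (leaf, introduce, remove, join) is routine and parallels the single-algorithm proof.
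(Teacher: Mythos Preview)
Your proposal is correct and follows essentially the same approach the paper outlines: the paper does not give a formal proof for this proposition but states, in the paragraph immediately preceding it, that correctness is obtained by extending the augmenting function~$\aug(\cdot)$ of Definition~\ref{def:augc} so that it operates on two separate table families (one for~$\AlgW$, one for~$\AlgC$), and then arguing that $\cw_{\INCSAT,\INCCOUNTERSAT}$ realizes this generalized augmentation in the same way the preceding proposition relates $\cw_{\AlgW,\AlgW}$ to $\aug(\ATab{\AlgW})$. Your sketch carries this out explicitly---including the adapted condition~(2), the induction over node types, the appeal to Proposition~\ref{thm:inc:runtime} for the semantics of $\INCSAT$ and $\INCCOUNTERSAT$, and the per-node runtime bound---and your identification of the ``clone'' issue and the role of the reduct sat-states~$\rho$ as the main technical obstacles matches what the paper flags (cf.\ the implementation discussion and the remark that Theorems~3.25 and~3.26 of~\cite{BliemEtAl16b} must be extended).
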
}

\section{Additional Information on the Benchmarks}
\label{appendix:benchmarks}

\subsection{Benchmark Sets}
In this paper, we mainly presented\footnote{Benchmarks, encodings, and results are available at
\url{https://github.com/daajoe/dynasp_experiments/tree/ipec2017}.}  the Steiner tree problem
using public transport networks. 
We also considered benchmarks for counting answer sets as carried out
in earlier work~\cite{FichteEtAl17}.  {\dynaspplus{\cdot}} performs
slightly better than \dynasp{} on those instances.  Hence, we do not
report them here.

\subsubsection{Transit Graphs}
The instance graphs have been extracted from publicly available mass
transit data feeds and splited by transportation
type,~e.g., train, metro, tram, combinations. We heuristically
computed tree decompositions~\cite{AbseherMusliuWoltran17a} and
obtained relatively fast decompositions of small width unless detailed
bus networks were present.  Among the graphs considered were public
transit networks of the cities London, Bangladesh, Timisoara, and
Paris.

\subsubsection{Steiner Tree Problem}
Picture yourself as the head of a famous internet service provider,
which is about to provide high-speed internet for their
most-prestigious customers in public administration in order to
increase productivity levels beyond usual standards.  However, these
well-paying customers have to be connected via expensive fibre cables.
The good news is that the city council already confirmed that you are
allowed to use existing cable ducts, which basically adhere to the
city's transit network. 
We assumed for simplicity, that edges have unit costs, and randomly
generated a set of terminal stations --- which are compliant with the
customers --- among our transit stations (vertices).  The goal is to
search for a set of transit connections of minimal cardinality such
that the customers can be connected for example when putting fibre
cables along the transit network.

An encoding for this problem is depicted in Listing~\ref{lst:st}
and assumes a specification of the graph (via \emph{edge}) 
and the facilities (\emph{terminalVertex}) 
of our customers in public administration
as well as the number (\emph{numVertices}) of vertices.
The encoding is based on the saturation technique~\cite{EiterGottlob95}
and in fact outperformed a different encoding presented in Listing~\ref{lst:st2}
 on all our instances using both solvers, Clasp and \dynaspplus$.
At first sight, this observation seems quite surprising,
however, we benchmarked on more than 60 graphs with 10
varying decompositions for each solver variant and additional
configurations and different encodings for Clasp.}

\shortversion{\longversion}{ 
\begin{lstlisting}[caption=Encoding for \pname{St},numbers=none,label=lst:st]
%
%
%
vertex(X) :- edge(X,_).
vertex(Y) :- edge(_,Y).
edge(X,Y) :- edge(Y,X).

%
0 { selectedEdge(X,Y) } 1 :- edge(X,Y), X < Y.

%
s1(X) @$\por$@ s2(X) :- vertex(X).

%
saturate :- selectedEdge(X,Y), s1(X), s2(Y), X < Y.
saturate :- selectedEdge(X,Y), s2(X), s1(Y), X < Y.
%
%
saturate :- N #count{ X : s1(X), terminalVertex(X) }, numVertices(N).
saturate :- N #count{ X : s2(X), terminalVertex(X) }, numVertices(N).

s1(X) :- saturate, vertex(X).
s2(X) :- saturate, vertex(X).
:- not saturate.

%
#minimize{ 1,X,Y : selectedEdge(X,Y) }.
\end{lstlisting}}%

\shortversion{\longversion}{\begin{lstlisting}[caption=Alternative encoding for \pname{St},numbers=none,label=lst:st2]
%
%
edge(X,Y) :- edge(Y,X).

{ selectedEdge(X,Y) : edge(X,Y), X < Y }.

%
reached(Y) :- Y = #min{ X : terminalVertex(X) }.
reached(Y) :- reached(X), selectedEdge(X,Y).
reached(Y) :- reached(X), selectedEdge(Y,X).

:- terminalVertex(X), not reached(X).

%
#minimize{ 1,X,Y : selectedEdge(X,Y) }.
\end{lstlisting}}%

\longversion{ 
\begin{example}[Steiner Tree]%
  \label{ex:steiner}
  Let $G = (V,E)$ be a graph and $V_T \subseteq V$. A \emph{uniform
    Steiner tree} on~$G$ is a subgraph~$S_G=(V_S, E_S)$ of~$G$ such
  that $V_T \subseteq V_S$ and for each distinct pair~$v$, $w \in V_T$
  there is a path from $v$ to $w$.
  The Steiner tree problem \pname{Enum\hy{}St} asks to output all uniform
  Steiner trees. We encode \pname{Enum\hy{}St} into an \ASP program as
  follows:
  Among the atoms of our program will be an atom~$a_v$ for each
  vertex~$v\in V_T$, and an atom~$e_{vw}$ for each edge~$vw \in E$ assuming $v<w$ for an arbitrary, but fixed total ordering~$<$ among~$V$.
  Let $s$ be an arbitrary vertex~$s \in V_T$.
  We generate program~$P(G, V_T) \eqdef %
  \SB \{e_{vw}\} \hsep \rsep \optimize e_{vw} \SM vw \in E \SE \cup %
  \SB a_{v}\hsep a_w, e_{vw} \rsep a_{w}\hsep a_v, e_{vw} \SM vw \in E, v < w \SE
  \,\cup\, %
  \SB \hsep \neg a_{v} \SM v \in V_T \SE \, \cup %
  \SB a_s \hsep \SE$.  It is easy to see that the answer sets of the
  program and the uniform Steiner trees are in a one-to-one
  correspondence.

\end{example}%

\subsubsection{Other Graph Problems}
We refer to the technical report~\cite{FichteHecherMoraketal2016report_} for comprehensive
benchmark results using common graph problems and comparing
different table algorithms of \dynasp{} and \dynaspplus{}.
There we investigated various graph problems including, but not limited to, variants of 
graph coloring, dominating set and vertex cover. Note that, however, the cluster setup was slightly different.

\subsection{Benchmark Environment}
The experiments presented ran on an Ubuntu~16.04.1 LTS Linux cluster of 3 nodes
with two Intel Xeon E5-2650 CPUs of 12 physical cores each at 2.2 Ghz
clock speed and 256 GB RAM.  All solvers have been compiled with gcc
version 4.9.3 and executed in single core mode.
}

\label{lastpage}
\end{document}